\newtheorem{theorem}{Theorem}
\newtheorem{corollary}{Corollary}
\newtheorem{lemma}{Lemma}
\newtheorem{definition}{Definition}
\newtheorem{claim}{Claim}
\newcommand{\qed}{\hfill $\Box$ \medbreak}
\newenvironment{proof}{\noindent {\bf Proof.}}{\qed}
\newenvironment{proofofclaim}{\noindent {\emph{Proof.}}}{\hfill $\diamond$ \medbreak}
\newcommand{\dMAM}{\textsf{dMAM}}
\newcommand{\id}{\mbox{\rm id}}
\newcommand{\Forb}{\mathrm{Forb}}
\begin{document}

\title{Compact Distributed Certification of Planar Graphs}

\author[1]{Laurent Feuilloley\thanks{Additional support from MIPP and Jos\'e Correa's Amazon Research Award}}
\author[2]{Pierre Fraigniaud\thanks{Additional support from ANR project DESCARTES, and from INRIA project GANG}}
\author[3]{Ivan Rapaport\thanks{Additional support from CONICYT via PIA/Apoyo a Centros Cient\'{\i}ficos y Tecnol\'ogicos de Excelencia AFB 170001 and Fondecyt 1170021}}
\author[4]{\'Eric R\'emila\thanks{Additional support from IDEXLYON (project INDEPTH) within the Programme Investissements  d'Avenir (ANR-16-IDEX-0005) and  ``Math\'ematiques de la D\'ecision pour l'Ing\'enierie Physique et Sociale'' (MODMAD) }%
}
\author[5]{Pedro Montealegre\thanks{Additional support from CONICYT via  PAI + Convocatoria Nacional Subvenci\'on a la Incorporaci\'on en la Academia A\~no 2017 + PAI77170068 and FONDECYT  11190482}}
\author[6]{Ioan Todinca}

\affil[1]{Departamento de Ingenier\'{\i}a Industrial, Universidad de Chile, Chile}
\affil[2]{IRIF, CNRS and Universit\'e de Paris, France}
\affil[3]{DIM-CMM (UMI 2807 CNRS), Universidad de Chile, Chile}
\affil[4]{GATE Lyon St-Etienne  (UMR 5824  CNRS), UJM St-Etienne, France}
\affil[5]{Facultad de Ingenier\'{\i}a y Ciencias, Universidad Adolfo Iba\~nez, Santiago, Chile.}
\affil[6]{LIFO, Universit\'e d'Orl\'eans and INSA Centre-Val de Loire, France}

\date{}

\maketitle

\begin{abstract}
Naor, Parter, and Yogev (SODA 2020) have recently demonstrated the existence of a \emph{distributed interactive proof} for planarity (i.e., for certifying that a network is planar), using a sophisticated generic technique for constructing distributed IP protocols based on sequential IP protocols. The interactive proof for planarity  is based on a distributed certification of the correct execution of any given sequential linear-time algorithm for planarity testing. It involves three interactions between the prover and the randomized distributed verifier (i.e., it is a \dMAM\/ protocol), and uses small certificates, on $O(\log n)$ bits in $n$-node networks. We show that a single  interaction from the prover suffices, and randomization is unecessary, by providing an explicit description of a \emph{proof-labeling scheme} for planarity, still using certificates on just $O(\log n)$ bits. We also show that there are no proof-labeling schemes --- in fact, even no \emph{locally checkable proofs} --- for planarity using certificates on $o(\log n)$ bits. 
\end{abstract}

\vfill

\thispagestyle{empty}
\pagebreak
\setcounter{page}{1}
 
\section{Introduction}

Planar graphs are arguably among the most studied classes of graphs in the context of algorithm design and analysis. In particular, planar graphs enable the design of algorithms provably faster than for general graphs (see, e.g., \cite{Cabello19} for a subquadratic algorithm for diameter in planar graphs, in contrast to the quadratic lower bound in~\cite{RodittyW13}). In the context of distributed network computing too, planar graphs have been the source of many contributions. For instance, a distributed algorithm approximating a minimum dominating set (MDS) within a constant factor, in a constant number of rounds, has been designed for planar graphs~\cite{LenzenOW08}. In contrast, even a poly-logarithmic approximation of the MDS problem requires at least $\Omega(\sqrt{\log n/ \log\log n})$ rounds in arbitrary $n$-node networks~\cite{KuhnMW04}. See Section~\ref{subsec:related-work} for further references to contributions on distributed algorithm design for planar graphs. 

In this paper, we are concerned with \emph{checking} whether a given network is planar. Indeed, we aim at avoiding  the risk of performing distributed computations dedicated to planar graphs in networks that are not planar, which may lead to erroneous outputs, or even lack of termination. Note that it is sufficient that one node detects if the actual graph is not planar, as this node can then broadcast an alarm, or launch a recovery procedure (e.g., for overlay networks). 

Efficient planarity tests have been designed in the \emph{sequential} setting~\cite{HopcroftT74}, but no such efficient algorithms are known in the \emph{distributed} setting. In fact, it is known~\cite{FraigniaudKP13}  that there are no \emph{local} decision algorithms for planarity. That is, there are no algorithms in which every node of a network exchange information solely with nodes in its vicinity, and outputs accept or reject such that the network is planar if and only if all nodes accept. 

Kuratowski's theorem states that a graph is planar if and only if it does not contain a subgraph that is a subdivision of the complete graph $K_5$, or the complete bipartite graph $K_{3,3}$. This characterization can easily be turned into a \emph{proof-labeling scheme}~\cite{KormanKP10} that a network is \underline{not} planar, by certifying the presence of a subdivided $K_5$ or $K_{3,3}$ as a subgraph. However, it is not clear whether Kuratowski's theorem can be used to prove that a network \underline{is} planar, in a distributed manner. Indeed, this would require to certify that no subdivided $K_5$ or $K_{3,3}$ are present in the network. Also, it is not clear whether using the coordinates of the nodes in a planar embedding would help, as checking whether two edges cross seems difficult if the extremities of these edges are embedded at far away positions in the plane.  Similarly, a distributed proof based on checking the consistency of the faces resulting from a planar embedding of the graph appears uneasy, as one can construct embeddings of non planar graphs in which the faces look locally consistent. 

Yet, Kuratowski's theorem can be used to show that planarity is at the second level of the local decision \emph{hierarchy} with alternating quantifiers (see~\cite{FeuilloleyFH16}). That is, for every assignment of $O(\log n)$-bit certificates to the nodes by a \emph{disprover} aiming at convincing the nodes of a planar network that this network is not planar, a \emph{prover} can assign other $O(\log n)$-bit certificates to the nodes for certifying that the distributed proof provided by the disprover is erroneous. The fact that planarity is at a low level of the local decision hierarchy is conceptually informative, but this fact does not provide a concrete distributed mechanism for certifying planarity. 

A breakthrough has been recently achieved by Naor, Parter, and Yogev~\cite{NaorPY20}, who showed the existence of a \emph{distributed interactive proof} for planarity. Such types of protocols are motivated by the possible access to services provided by a computationally powerful external entity, e.g., the cloud~\cite{KolOS18}. The interactive protocol for planarity is not explicit, but can be constructed automatically from any sequential linear-time algorithm for planarity testing. In fact, this protocol is just one illustration of a sophisticated generic \emph{compiler} developed in~\cite{NaorPY20} for constructing distributed IP protocols based on sequential IP protocols, which applies way beyond the case of planarity. Specifically, given an algorithm for planarity testing, the interactive protocol is roughly the following. The $O(n)$ steps corresponding to the execution of the algorithm on the actual network are distributed to the nodes, and the interactive protocol checks that these steps are consistent (e.g., the output of step~$i$ is the input of step~$i+1$, etc.). The node handling the final step of the execution of the algorithm accepts or rejects according to the outcome of the algorithm. The steps of the execution (e.g., reading some memory location, or adding such and such registers) can be encoded on $O(\log n)$ bits, and it is shown that verifying the consistency of the $O(n)$ steps distributed over the $n$~nodes can be achieved by a \dMAM\/ interactive protocol,  using certificates on $O(\log n)$ bits. In such a protocol, the \emph{non-trustable} prover (a.k.a.~Merlin) provides each node with an $O(\log n)$-bit certificate. Then, at each node, the \emph{honest} verifier (a.k.a.~Arthur) running at that node challenges the prover with a random value. Finally the prover replies to each node by providing it with a second certificate, again on $O(\log n)$ bits. Once this is done, every node interacts with all  its neighbors only once, and outputs accept or reject. The authors of~\cite{NaorPY20} proved that, using their \dMAM\/ protocol, all nodes accept if and only if the network is planar. 

The mechanism in~\cite{NaorPY20} actually applies to all classes of graphs that can be recognized with a sequential polynomial-time algorithm. The resulting distributed  interactive protocol is however efficient only for classes of sparse graphs, recognizable in quasi-linear time, among which the class of planar graphs plays a prominent role.

\subsection{Our results}

We show that several interactions  between the prover and the verifier are not necessary for distributed certification of planarity, while the size of the certificates can be kept as small as in~\cite{NaorPY20}. Indeed, we provide an explicit description of a \emph{proof-labeling scheme}~\cite{KormanKP10}, still using certificates on just $O(\log n)$ bits. A proof-labeling scheme requires a single interaction between the prover and the distributed verifier, and no randomization is required. Specifically, the prover provides each node with an $O(\log n)$-bit certificate, and then the nodes can directly move on with the local verification stage, in which every node interacts with all  its neighbors once, and outputs accept or reject such that all nodes accept if and only if the network is planar. Proof-labeling schemes can be implemented even in absence of services provided by external entities like the cloud. Actually, in many frameworks, including the one in this paper, the certificates can be computed in a distributed manner by the network itself during a pre-processing phase. 

As mentioned before, it is not clear whether Kuratowski's theorem, or the use of coordinates can be applied to prove that a network is planar, in a distributed manner (Kuratowski's theorem can be used to certify that a network is \underline{not} planar). Therefore, we adopt a different approach, by asking the prover to certify a specific form of planar embedding of the network, not relying on coordinates. Our approach is inspired by the work on planar graphs by Fraysseix and Rosentiehl~\cite{FraysseixR85}, based itself on Tutte's crossing number theory~\cite{Tutte70}. Observe that any graph $G$ can be embedded in the plane as a planar embedding of a spanning tree $T$ of $G$, plus cotree edges (the  \emph{cotree} of $G$ associated with $T$ is the set of edges in $G$ not in $T$), such that the crossings occur  between cotree edges only. Such an embedding is called a \emph{$T$-embedding} of~$G$. A graph $G$ is planar if and only if there exists a $T$-embedding of $G$ in the plane with no crossing edges. Given a planar graph $G$, our prover provides the nodes with a distributed proof that there is $T$-embedding of $G$ in which no cotree-edges cross. 

We also show that our proof-labeling scheme has certificates of optimal size, in the sense that there are no \emph{locally checkable proofs}~\cite{GoosS16} for planarity that use certificates of size $o(\log n)$ bits (locally checkable proofs are verification mechanisms stronger than proof-labeling schemes). 
In fact, we show a more general result regarding the class of graphs excluding a complete graph $K_k$ as a minor, for any $k\geq 3$, and the class of graphs excluding a complete bipartite graph $K_{p,q}$ as a minor, for any $p,q\geq 2$. The proof for graphs excluding  $K_k$ is an extension of the technique used in~\cite{FeuilloleyH18} for lower bounding the size of \emph{global} certificates. The proof for graphs excluding $K_{p,q}$ is a non-standard adaptation of the original proof of a lower bound for checking spanning tree and leader election in~\cite{GoosS16}. Combining these results allows us to consider the class of graphs  excluding both $K_5$ and $K_{3,3}$ as minors, from which the lower bound for planarity follows directly. 

To sum up, the paper is dedicated to establishing the following results. 

\begin{theorem}\label{theo:main}
There is a 1-round proof-labeling scheme for planarity with certificates on $O(\log n)$ bits in $n$-node networks. 
\end{theorem}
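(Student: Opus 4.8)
The plan is to follow the geometric route suggested by the notion of a $T$-embedding: the prover fixes a spanning tree $T$, a planar drawing of $T$, and a routing of the cotree edges that avoids all crossings, and the verifier checks locally that what is described is indeed a crossing-free $T$-embedding. By the characterization recalled above, $G$ is planar if and only if such an object exists, so it suffices to design $O(\log n)$-bit certificates that can be produced whenever a crossing-free $T$-embedding exists, and that, when accepted, guarantee one exists.

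First I would certify the combinatorial skeleton. I take $T$ to be a DFS (Tr\'emaux) tree, certified by the standard mechanism: each node stores the identifier of the root, its distance to the root, its parent, and its DFS in/out times, all of which fit in $O(\log n)$ bits and are locally checkable; the DFS property guarantees that every cotree edge is a \emph{back edge} joining a vertex to one of its ancestors. A planar drawing of $T$ is then specified by a left-to-right ordering of the children at each node, encoded by a sibling rank; together with the DFS intervals this determines the contour (Euler tour) of the drawn tree and assigns to every corner of every vertex a position on this contour, computable locally from the stored intervals.

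Next I would certify that the cotree edges are routed without crossings. In a crossing-free $T$-embedding the cotree edges behave like pairwise non-crossing chords whose endpoints lie on the contour, and a family of chords is non-crossing exactly when the associated family of intervals is \emph{laminar}. The prover therefore assigns to each cotree edge the two contour positions of its endpoints together with the data of the containment (laminar) forest, i.e.\ a pointer to the tightest enclosing cotree edge and its relative order among siblings, so that the verifier can check proper nesting and non-interlacing by comparing $O(\log n)$-bit integers. To keep every certificate within $O(\log n)$ bits despite the fact that a single vertex may be incident to $\Omega(n)$ cotree edges (as already in $K_{2,n}$), I would store the data of each cotree edge at only one of its endpoints, chosen through a bounded in-degree orientation of the cotree graph; since the latter is a subgraph of the planar graph $G$, hence $O(1)$-degenerate, an orientation of in-degree $O(1)$ exists, so each node is responsible for $O(1)$ cotree edges.

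Completeness is then straightforward: from an actual crossing-free $T$-embedding one reads off all of the above data, and every local check passes. Soundness follows by arguing that accepting certificates force the existence of a crossing-free $T$-embedding: the tree and sibling orders yield a genuine planar drawing of $T$ and its contour, and the accepted laminar data certify that the cotree chords can be inserted pairwise non-crossing, producing a plane drawing of $G$. I expect the main obstacle to be exactly this soundness step for the cotree part: non-crossing is a \emph{global} property, whereas the verifier performs only local, neighbor-only tests, so the crux is to design the laminar-forest certificate so that its $O(\log n)$-bit local checks cannot be satisfied by a fraudulent (interlacing) family. The tool I would rely on is the DFS/Tr\'emaux structure, which turns interlacement of back edges into constraints that are local along the tree, allowing the global non-crossing condition to be enforced by per-vertex checks.
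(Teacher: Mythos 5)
Your high-level architecture coincides with the paper's: fix a spanning tree, unfold it along its Euler tour (what the paper calls the DFS-mapping $f$, producing the graph $G_{T,f}$), reduce planarity to the statement that the cotree edges form a non-crossing family of chords on the resulting path, and use $O(1)$-degeneracy of planar graphs to store each cotree edge's data at a single endpoint so that certificates stay within $O(\log n)$ bits. All of these ingredients appear in the paper (Sections~\ref{se:transform} and~\ref{se:PLS-planar}, Lemmas~\ref{le:GTf} and~\ref{le:noGTf}).

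However, there is a genuine gap at exactly the point you flag as ``the main obstacle'': you never actually give a sound local verification that the chords are pairwise non-crossing, and the mechanism you sketch would not work as stated. Storing at each cotree edge a pointer to its tightest enclosing cotree edge plus a sibling rank is not locally checkable: two sibling chords $\{a_1,b_1\}$ and $\{a_2,b_2\}$ that interlace have four endpoints that need not be pairwise adjacent in $G$, so no single node sees enough to detect the interlacement, and no node can even verify that the claimed parent chord is a real edge of the graph. Appealing to the Tr\'emaux/back-edge structure does not by itself repair this. The paper's resolution is a different certificate and a genuinely non-trivial soundness argument (Lemma~\ref{lem:PLS-path-outerplanar}): every \emph{position on the path} stores the tightest edge-interval $I(x)=[a,b]$ covering it, the verifier enforces consistency of these intervals between consecutive and adjacent positions (Algorithm~\ref{al:path-outerplanar}), and soundness is proved by taking a minimal interlacing pair, building a path of ``maximal'' edges between the two crossing chords, and showing all its vertices would have to carry the same interval --- contradicting the two distinct intervals forced at its ends (Claims~\ref{claim:inequality-intervals}--\ref{cl:extreme-edge}). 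That argument is the core of the upper bound, and your proposal leaves it unsupplied.
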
 

For a finite family $\cal{H}$ of graphs, let $\Forb(\mathcal{H})$ be the class of graphs with all the graphs in~$\mathcal{H}$ excluded as ``forbidden minors''.

\begin{theorem}\label{theo:lwb}
Let $\mathcal{F} = \{K_k:k\geq 3\}\cup\{K_{p,q}: p,q\geq 2\}$. For any non-empty finite family $\cal{H}$ of graphs in~$\mathcal{F} $, there are no locally checkable proofs for $\Forb(\cal{H})$ using certificates on $o(\log n)$ bits. 
\end{theorem}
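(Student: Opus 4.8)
The plan is to prove the $\Omega(\log n)$ lower bound via the standard cut-and-paste (crossing) argument for locally checkable proofs, tailored so that the gluing operation manufactures exactly one forbidden minor. First I would recall the LCP model: a prover assigns each node a certificate, and each node, inspecting its closed neighborhood together with the incident certificates, port numbers, and identifiers, outputs accept or reject; completeness demands that every graph in $\Forb(\mathcal{H})$ admit an accepting certification, while soundness demands that every graph outside $\Forb(\mathcal{H})$ have a rejecting node under every certification. The first reduction is to isolate a single target: since $\mathcal{H}$ is finite, non-empty, and contained in $\mathcal{F}$, I would fix one member $H^\star\in\mathcal{H}$ (either a clique $K_k$ or a biclique $K_{p,q}$) and build families of yes-instances that avoid all minors in $\mathcal{H}$ simultaneously, while arranging that the structure created upon crossing is an $H^\star$-minor. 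Membership of the crossed graph outside $\Forb(\mathcal{H})$ then follows from the presence of this single $H^\star$.

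The combinatorial core is a crossing lemma. I would construct, for each $n$, a family of $N(n)$ pairwise distinct yes-instances sharing a common interface, namely a cut of constant width $w$ separating a left part from a right part. Fixing an accepting certification for each instance, the data visible across the cut --- the incident certificates and port numbers on the $w$ cut-edges, with the cut-vertices' identifiers fixed identically throughout the family --- form a \emph{signature} taking at most $2^{O(sw)}$ values, where $s$ is the certificate size. If $s=o(\log n)$ and $w=O(1)$, then $2^{O(sw)}=n^{o(1)}$, so for $N(n)=n^{\Omega(1)}$ a pigeonhole collision yields two distinct yes-instances with identical cut-signatures. Gluing the left part of one to the right part of the other, and keeping the two certifications on their respective sides, produces a certification of the glued graph in which every node retains exactly the local view it had in one of the originals; hence every node accepts. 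Designed correctly, this glued graph is a no-instance, contradicting soundness, so $s=\Omega(\log N/w)=\Omega(\log n)$. A point specific to LCPs (as opposed to proof-labeling schemes) is that the verifier may read unique identifiers; I would neutralize this exactly as in the G\"o\"os--Suomela framework, by fixing the interface identifiers across the whole family and giving the interiors disjoint identifier sets, so that the glued graph is a legally labeled instance on which each node keeps its original view.

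For the clique case I would follow the Feuilloley--Hirvonen technique used to lower-bound global certificates, realized here as a path of gadgets. Each yes-instance is a long chain of sparse gadgets carrying a global consistency label (for instance a distance or depth field of a rooted spanning tree), planar and $K_k$-minor-free by construction; the label ranges over $\mathrm{poly}(n)$ values, giving $N(n)=n^{\Omega(1)}$ and hence $\log N=\Omega(\log n)$. The chain is arranged so that a mismatch between the labels of the two halves after crossing forces a shortcut wiring that merges $k$ prescribed branch sets into mutual adjacency, exhibiting a $K_k$-minor. Verifying that each honest instance is genuinely $H^\star$-free, that the cut width is constant, and that the shortcut indeed yields the clique minor are the routine but necessary checks.

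The biclique case is where I expect the main obstacle, and it demands the non-standard adaptation of the G\"o\"os--Suomela spanning-tree/leader-election lower bound. Their construction engineers an inconsistency that, upon crossing, closes a single cycle; for $K_{p,q}$ I instead need the crossing to produce $p$ branch sets on one side each adjacent to all $q$ branch sets on the other, i.e.\ $p\cdot q$ simultaneous connections rather than one. The plan is to bundle $q$ parallel tracks through the cut and replicate the source side $p$ times, so that a single signature collision re-routes all tracks at once and realizes the full biclique minor, while each uncrossed instance stays planar and free of every minor in $\mathcal{H}$. Keeping the cut width constant despite the $\Theta(pq)$ connections --- so the pigeonhole still bites at $s=o(\log n)$ --- while simultaneously guaranteeing $\mathcal{H}$-freeness of the honest instances is the technical heart of the argument. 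The combination step is then immediate: for a family such as $\{K_5,K_{3,3}\}$, whose excluded-minor class is exactly the planar graphs, I would use the biclique construction with $H^\star=K_{3,3}$, whose yes-instances are planar and hence exclude both forbidden minors, while crossing yields a $K_{3,3}$-minor and thus a non-planar no-instance; a general finite $\mathcal{H}\subseteq\mathcal{F}$ is handled by selecting $H^\star$ and ensuring the honest instances are sparse enough to exclude all members of $\mathcal{H}$ at once.
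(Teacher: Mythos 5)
Your high-level architecture matches the paper's: two separate constructions (one for cliques, one for complete bipartite graphs), yes-instances that are outerplanar and hence avoid every minor in $\mathcal{H}$ simultaneously, a gluing that manufactures a single $H^\star\in\mathcal{H}$ as a minor, and the identifier bookkeeping specific to locally checkable proofs. The gap is that in both cases the combinatorial core you propose --- a pigeonhole collision between \emph{two} yes-instances across a \emph{single constant-width cut} --- is not the mechanism that works, and you do not supply one that does. For $\Forb(K_k)$, crossing two path-shaped instances at one cut just yields another path-shaped instance; it cannot close the structure needed for a $K_k$ minor. The paper instead builds paths of $\Theta(n)$ \emph{blocks}, each block a $K_{k-1}$ on $k-1$ fixed identifiers, and counts \emph{globally}: there are $p!=2^{\Omega(n\log n)}$ orderings of the ordinary blocks but only $2^{(k-1)g(n)p}=2^{o(n\log n)}$ ways to label all blocks with $g(n)$-bit certificates, so two differently ordered accepted paths receive identical per-block certificates; an inversion between the two orderings then lets one splice out a \emph{cycle} of blocks on which every node keeps an accepted view, and the $K_k$ minor arises by contracting the complement of one block onto a single vertex adjacent to all $k-1$ vertices of that block. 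None of this is a single-cut signature argument, and your ``shortcut wiring that merges $k$ prescribed branch sets'' is never instantiated.

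For $\Forb(K_{p,q})$ the mismatch is sharper, and you flag it yourself as unresolved. A collision between two instances gives you at most two interchangeable sides; to contract the glued graph onto $K_{q,q}$ you need $q$ disjoint $A$-paths and $q$ disjoint $B$-paths with all $q^2$ cross-connections realized, which requires $q^2$ instances $I_{a_i,b_j}$ whose accepting certificates agree on all interface nodes \emph{simultaneously}. Pairwise pigeonhole does not produce such a mutually consistent family; the paper obtains it by colouring the edges of $K_{n,n}$ (vertices are the identifier blocks, the colour of $\{a,b\}$ is the tuple of interface certificates of $I_{a,b}$) with $n^{o(1)}$ colours and invoking the K\H{o}v\'ari--S\'os--Tur\'an/F\"uredi bound to extract a monochromatic $K_{q,q}$, after which a cyclic-shift wiring of the $2q$ paths yields the illegal instance. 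This extremal-graph-theoretic step is precisely the ``technical heart'' you correctly anticipate but leave open, so the proposal as written does not establish the theorem; only the final combination step (choosing $H^\star$ and using outerplanarity of the legal instances) is carried out as in the paper.
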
 

The following result follows directly from Theorem~\ref{theo:lwb} as the planar graphs form the class $\Forb(\{K_5,K_{3,3}\})$ by Wagner's theorem (see \cite{Diestel2000}). 

\begin{corollary}
There are no locally checkable proofs for planarity, using certificates on $o(\log n)$ bits. 
\end{corollary}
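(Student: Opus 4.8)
The plan is to obtain the corollary as an immediate specialization of Theorem~\ref{theo:lwb}. The only structural ingredient needed is Wagner's theorem, which characterizes the planar graphs precisely as the graphs excluding both $K_5$ and $K_{3,3}$ as minors; in the notation of Theorem~\ref{theo:lwb} this says that the class of planar graphs equals $\Forb(\mathcal{H})$ for the family $\mathcal{H} = \{K_5, K_{3,3}\}$.

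First I would check that this $\mathcal{H}$ is an admissible family for Theorem~\ref{theo:lwb}, i.e., that it is a non-empty finite family of graphs drawn from $\mathcal{F} = \{K_k : k \ge 3\} \cup \{K_{p,q} : p,q \ge 2\}$. This is immediate: $K_5$ is the complete graph $K_k$ with $k = 5 \ge 3$, and $K_{3,3}$ is the complete bipartite graph $K_{p,q}$ with $p = q = 3 \ge 2$, so $\{K_5, K_{3,3}\} \subseteq \mathcal{F}$, and the family has exactly two members, hence is finite and non-empty. Then I would simply instantiate Theorem~\ref{theo:lwb} with $\mathcal{H} = \{K_5, K_{3,3}\}$: the theorem asserts that there are no locally checkable proofs for $\Forb(\mathcal{H})$ using certificates of size $o(\log n)$. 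Since $\Forb(\mathcal{H})$ is exactly the class of planar graphs by Wagner's theorem, this is precisely the statement that there are no locally checkable proofs for planarity with certificates on $o(\log n)$ bits, which is the corollary.

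There is essentially no obstacle at this level: all the difficulty has been pushed into Theorem~\ref{theo:lwb}, whose proof (per the discussion in the introduction) combines a lower bound for $\Forb(\{K_k\})$ adapting the global-certificate technique of~\cite{FeuilloleyH18} with a separate lower bound for $\Forb(\{K_{p,q}\})$ adapting the spanning-tree/leader-election fooling-set argument of~\cite{GoosS16}. The one point worth double-checking is that excluding a \emph{finite family} of forbidden minors behaves as expected, i.e.\ that a lower bound for $\Forb(\mathcal{H})$ with $|\mathcal{H}| \ge 2$ is what the theorem delivers rather than a lower bound for excluding a single minor; since Theorem~\ref{theo:lwb} is already stated for arbitrary non-empty finite $\mathcal{H} \subseteq \mathcal{F}$, this is covered directly, and $\{K_5, K_{3,3}\}$ falls squarely within its scope.

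If one instead wanted a self-contained proof of the corollary, bypassing the general theorem, the hard part would be to exhibit an explicit infinite family of instances --- planar graphs versus graphs containing a ``spread-out'' forbidden $K_5$- or $K_{3,3}$-subdivision across a small edge cut --- on which any locally checkable proof must employ $\Omega(\log n)$ distinct certificate patterns, typically via a nondeterministic communication-complexity or crossing-sequence argument. But given Theorem~\ref{theo:lwb}, none of this is needed, and the corollary is a one-line specialization via Wagner's theorem.
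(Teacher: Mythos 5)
Your proposal is correct and matches the paper's own derivation exactly: the paper obtains the corollary as an immediate consequence of Theorem~\ref{theo:lwb} applied to $\mathcal{H}=\{K_5,K_{3,3}\}$, invoking Wagner's theorem to identify $\Forb(\{K_5,K_{3,3}\})$ with the planar graphs. The extra checks you perform (membership of $K_5$ and $K_{3,3}$ in $\mathcal{F}$, finiteness and non-emptiness of $\mathcal{H}$) are routine and consistent with the paper.
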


Theorem~\ref{theo:lwb} can also be used for other graph classes. For example, outerplanar graphs, which are the planar graphs that can be drawn in the plane with all the vertices on the outerface, form the class $\Forb(\{K_4,K_{2,3}\})$, and thus have no locally checkable proof with certificates on $o(\log n)$ bits.

\paragraph{Remark.} All our lower bounds hold even if one allows verification algorithms performing an arbitrarily large constant number of rounds.
\subsection{Related work}
\label{subsec:related-work}

\paragraph{ \bf Distributed algorithms for planar graphs.} 

As for the case of sequential computing, planar graphs have attracted significant interest in the context of distributed computing. Indeed, planar graphs do have structural properties that allow for fast distributed algorithms, in particular as far as approximation algorithms for classic problems such as minimum dominating set, or maximum matching are concerned. We refer to~\cite{CHS06, CHSWW14, CHSWW17,CHW08-spaa, CHW08-disc, HilkeLS13, LenzenOW08, LenzenPW13, Wawrzyniak14} for a non-exhaustive list of examples of such contributions. 

Recently,  it was proved that a combinatorial planar embedding that consists of each node knowing the clockwise order of its incident edges in a fixed planar drawing can be computed efficiently in the \textsf{CONGEST} model~\cite{GhaffariH16a}, and then used to derive $O(D)$-round distributed algorithms for MST and min-cut in planar networks of diameter~$D$~\cite{GhaffariH16b}, hence bypassing the lower bounds $\tilde{\Omega}(D+\sqrt{n})$ for general networks~\cite{PelegR00,SarmaHKKNPPW12}. Even more recently, a randomized distributed algorithm for computing a DFS tree in planar networks has been designed~\cite{GhaffariP17}, whose complexity beats the best known complexity bound $O(n)$ for general graphs~\cite{Awerbuch85}. 

More generally, it is  worth to mention that there is also a large recent literature on distributed algorithms designed for other families of sparse graphs, beyond planar graphs. This is for instance the case of graphs of bounded genus, and graphs of bounded expansion (see, e.g., \cite{AmiriMRS18,AmiriSS19}). We refer to \cite{Feuilloley20} for a bibliography on distributed algorithms in classes of sparse graphs. 

\paragraph{\bf Distributed decision and verification.}

Locally checkable proofs (LCPs) were introduced in~\cite{GoosS16}, as an extension of the seminal notion of \emph{proof-labeling scheme} (PLS) introduced in~\cite{KormanKP10}. There are only two differences between these two concepts, but they are subtle, and require some care. The first  difference is that LCPs allow for verification procedures performing many rounds, while PLS are restricted to a single round of verification. Nevertheless, PLS can be easily extended to many rounds whenever the certificates are large enough to contain IDs, which enables  to trade longer verification time for smaller  certificates size (see, e.g.,~\cite{FeuilloleyFHPP18}).  The second difference is more profound. While PLSs impose verification procedures exchanging the certificates only, LCPs allow the verification procedures to exchange additional information, e.g., the entire states of the nodes. 

As a consequence, a PLS with certificates on $O(\log n)$ bits can be systematically implemented in one round in the \textsf{CONGEST} model, while this is not necessarily the case of LCPs, if the additional information is of size $\omega(\log n)$ bits. Nevertheless, this phenomenon has little impact on upper bounds in general, as the states of the nodes are often encoded on a logarithmic number of bits too (e.g., a constant number of pointers to neighbors, a constant number of bits marking the nodes, etc.). On the other hand, the difference between LCPs and PLSs has strong impact on the design of lower bounds, especially for demonstrating impossibility results when using sub-logarithmic certificates, as the IDs of the nodes (which are part of their states) are seen by the neighbors in LCPs, but not in PLSs with certificates on $o(\log n)$ bits.  \emph{Non-deterministic local decision} (NLD)~\cite{FraigniaudKP13} is yet another similar notion of distributed certification. It differs from the PLS and LCP in the fact that the certificates must be independent from the identity-assignment to the nodes. 
We refer to~\cite{Feuilloley2019,FeuilloleyF16} for recent surveys on distributed decision and verification.

\paragraph{Remark.} Observe that the upper bound in Theorem~\ref{theo:main} is obtained by designing a PLS for planarity, while the lower bounds in Theorem~\ref{theo:lwb} hold even for LCPs.\medskip

All the aforementioned notions were extended by allowing the verifier to be randomized (see~\cite{FraigniaudPP19}). Such protocols were originally referred to as \emph{randomized PLS} (RPLS), but are nowadays referred to as distributed Merlin-Arthur (\textsf{dMA}) protocols. The concept of distributed verification has also been extended~\cite{BalliuDFO17,FeuilloleyFH16} in a way similar to the way NP was extended to the complexity classes forming the Polynomial Hierarchy, by alternating quantifiers. 

\paragraph{\bf Distributed interactive proofs.} 

Recently, \emph{distributed interactive proofs} were formalized~\cite{KolOS18}, and the classes $\mathsf{dAM}[k]$ and $\mathsf{dMA}[k]$, $k\geq 1$, were defined, where $k$ denotes the number of alternations between the centralized Merlin, and the decentralized Arthur. For instance, $\mathsf{dAM}[3]=\mathsf{dMAM}$ and $\mathsf{dMA}[2]=\mathsf{dMA}$, while LCP and PLS can be viewed as equal to $\mathsf{dAM}[1]=\mathsf{dM}$ (Merlin provides the nodes with their certificates, without challenges from Arthur). Distributed interactive protocols for problems like the existence of a non-trivial automorphism (\textsf{AUT}), and non-isomorphism ($\overline{\mathsf{ISO}}$) were designed and analyzed in~\cite{KolOS18}. The follow up paper~\cite{NaorPY20} improved the complexity of some of the protocols in~\cite{KolOS18}, either in terms of the number of interactions between the prover and the verifier, and/or in terms of the size of the certificates. A sophisticated generic way for constructing distributed IP protocols based on sequential IP protocols is presented in~\cite{NaorPY20}. One of the main outcome of this latter construction is a $\mathsf{dMAM}$ protocol for planarity, using certificates on $O(\log n)$ bits. For other recent results on distributed interactive proof, see~\cite{CrescenziFP19,FraigniaudMORT19}.

\section{Model and definitions}

We consider the standard model for distributed network computing~\cite{Peleg00}. The network is modeled as a simple connected graph $G=(V,E)$ --- note that self-loops and multiple edges can be eliminated without impacting planarity, and the planarity test could  trivially be performed independently in each connected components if the graphs were not connected. The number of nodes is denoted by $n=|V|$. Each node~$v$ has an identifier $\id(v)$, which is unique in the network, and picked from a range of IDs polynomial in~$n$. Therefore, every node identifier can be stored on $O(\log n)$ bits. 

We recall the notion of \emph{proof-labeling schemes}, and \emph{locally checkable proofs} for certifying graph classes. (These mechanisms can also be used to certify classes of node- or edge-labeled graphs, but we are solely interested in unlabeled graph classes in this paper). Let $\cal{C}$ be a graph class, e.g., planar graphs. A locally checkable proof for $\cal{C}$ is a prover-verifier pair where the \emph{prover} is a non-trustable  oracle assigning \emph{certificates} to the nodes, and the \emph{verifier} is a distributed algorithm enabling the nodes to check the correctness of the certificates by performing a single round of communication with their neighbors. Note that the certificates may not depend on the instance $G$ only, but also on the identifiers assigned to the nodes. In proof-labeling schemes, the information exchanged between the nodes during the verification phase is limited to the certificates. Instead, in locally checkable proofs, the nodes may exchange extra-information regarding their individual state (e.g., their IDs, if not included in the certificates, which might be the case for certificates of sub-logarithmic size). The prover-verifier pair must satisfy the following two properties.

\begin{description}
\item[Completeness:] Given $G\in\cal{C}$, the non-trustable prover can assign certificates to the nodes such that the verifier \emph{accepts} at all nodes;
\item[Soundness:] Given $G\notin\cal{C}$, for every certificate assignment to the nodes by the non-trustable prover, the verifier  \emph{rejects} in at least one node. 
\end{description}

The main \emph{complexity measure} for both locally checkable proofs, and proof-labeling schemes is the size of the certificates assigned to the nodes by the prover. 
 
 \medskip
 
 As an example, let us consider the class of paths. A possible proof-labeling scheme is as follows. Given a path $P=(v_1,\dots,v_n)$, the prover assigns the certificate $c(v_i)$ to node $v_i$, simply defined as the hop-distance between $v_1$ and $v_i$, for $i=1,\dots,n$. The verifier executed at node $u$ checks that $u$ has degree~1 or~2, and, if $u$ has degree~2, then it also checks that one of its two neighbors has certificate $c(u)-1$ while the other has certificate $c(u)+1$. If all tests are passed, then $u$ accepts, else it rejects. Completeness holds by construction. For soundness, we simply observe that, if all nodes accept, then the nodes are necessarily forming a path as we assume the network to be connected. 
 
Certifying the class of graphs \emph{containing} a path of length at least~$k$, for some fixed $k\geq 1$ can be done similarly, by also providing in the certificate of a node the IDs of its predecessor and successor in the path. The extremity of the path with positive hop-distance also checks that its hop-distance is at least~$k$. In addition, the certificates must contain a distributed proof that the path exists somewhere in the network. This is achieved by asking the prover to provide the nodes with a distributed encoding of a tree spanning all nodes, rooted at one origin of the path, plus a distributed proof for this spanning tree. Such proof is known for long, as it is implicitly or explicitly present in the early work on self-stabilizing algorithm~\cite{AfekKY97,AwerbuchPV91,ItkisL94} --- it simply consists of yet another hop-distance counter, plus the ID of the root. 

The examples above are the main ingredients enabling the design of a proof-labeling scheme for certifying \emph{non planarity}, whose existence is folklore in the context of distributed certification. It is indeed sufficient to provide the nodes with a distributed proof that the graph contains a subdivided $K_5$ or a subdivided~$K_{3,3}$. In both cases, this can be done by encoding the paths corresponding to these subdivided graphs  in the certificates of the nodes in these paths. In addition, a spanning tree and its proof are given in the certificates of all nodes for establishing the existence of the subdivided graph (every node receives a pointer to a parent, its hop-distance to the root, and the ID of the root, which must be a node of the subdivided $K_5$, or subdivided~$K_{3,3}$). We omit all the tedious details, and we expect that the reader has now understood the concept of  proof-labeling schemes (and locally checkable proofs). 

Observe that, in all the examples mentioned in this section, including certifying non-planarity, all certificates can be encoded on $O(\log n)$ bits. In the remaining part of the paper, we demonstrate that planarity can be distributedly certified, with $O(\log n)$-bit certificates too. 

\section{Upper bound}

In this section, we establish our main result, namely, that there is a 1-round proof-labeling scheme for planarity with certificates on $O(\log n)$ bits in $n$-node networks. The design and analysis of our proof-labeling scheme for planarity is decomposed into three stages. First, in Section~\ref{subsec:path-outerplanar}, we describe a proof-labeling scheme for a specific class of planar graphs, called \emph{path-outerplanar} graphs. Roughly, these graphs are Hamiltonian graphs which can be drawn in the plane without crossing edges in such a way that the Hamiltonian path forms a line, and the edges not in the path, are all on the same side of the line. Note that a non-Hamiltonian tree is outerplanar, but is not path-outerplanar. 
Next, in Section~\ref{se:transform}, 
we show how to transform a planar graph into a path-outerplanar graph.
More precisely, we show how to transform a $T$-embedding~\cite{FraysseixR85} of a graph into a drawing of a new graph such that, roughly, the new graph is path-outerplanar if and only if the original graph is planar. 
Finally, Section~\ref{se:PLS-planar} shows how to use this transformation for extending the proof-labeling scheme for path-outerplanarity to a proof-labeling scheme for planarity. Throughout the scheme, we use the key property that every planar graph is 5-degenerate (i.e., every subgraph has a vertex of degree at most~5), in order to distribute the certificates evenly among the nodes.

\subsection{Certifying path-outerplanar graphs}
\label{subsec:path-outerplanar}

In this section, we present a proof-labeling scheme for a subclass of outerplanar graphs, that will be used as a building block for our scheme for planar graphs.  Recall that a graph is outerplanar if it has a planar drawing with all vertices incident to the same face, called the outerface. We start with a combinatorial definition of path-outerplanarity, and then show the equivalence with a geometric definition.

\begin{definition}\label{def:path-outerplanar}
A graph $G = (V,E)$ is \emph{path-outerplanar} if there is a total ordering $P=(V,<)$ of its vertices such that $P$ forms a path (i.e., consecutive vertices in $P$ are adjacent in~$G$), and, for any pair of edges $\{a,b\}, \{c,d\} \in E$ with $a<b$ and $c<d$, one of the following  inequalities holds: $a<b \leq c<d$, $c<d \leq a<b$, $a \leq c < d \leq b$, or $c \leq a < b \leq d$. The ordering $P$ is called a \emph{path-outerplanarity witness} of $G$.
\end{definition} 

\begin{lemma}\label{lem:path-outerplanar-drawing}
A graph is \emph{path-outerplanar} if and only if it has a Hamiltonian path that can be drawn as a horizontal line such that all edges that do not belong to the path can be drawn above that line as semi-circles without crossings. 
\end{lemma}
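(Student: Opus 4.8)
The plan is to show that the combinatorial conditions of Definition~\ref{def:path-outerplanar} are nothing but an algebraic encoding of the geometric ``no-crossing'' condition for the semicircle drawing, so that the two notions coincide essentially by unwinding definitions. The starting observation is that, after fixing the ordering $<$ and identifying each vertex with its rank (so that the vertices sit at positions $1,\dots,n$ on the $x$-axis), the four displayed inequalities for a pair of edges $\{a,b\},\{c,d\}$ with $a<b$ and $c<d$ say precisely that the closed intervals $[a,b]$ and $[c,d]$ are either disjoint or nested. Equivalently, the \emph{only} configuration they forbid is the strictly interleaving one, $a<c<b<d$ or $c<a<d<b$. Thus path-outerplanarity asserts the existence of an ordering forming a path in which no two edges interleave, and the whole lemma reduces to a single geometric claim: when the vertices are placed on a line in the order $<$ and each non-path edge is drawn as the upper semicircle having its two endpoints as diameter, two such semicircles meet in their interiors if and only if their endpoint intervals strictly interleave.

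Granting this claim, both directions are short. For the forward direction I would take a witness ordering $P=(V,<)$, place the vertices left to right, draw the consecutive (path) edges along the line, and draw every other edge as an upper semicircle. Consecutive vertices are adjacent by hypothesis, so the line realizes a Hamiltonian path; moreover the four inequalities hold for every pair of non-path edges, hence no such pair interleaves, hence by the claim no two semicircles cross. One also checks two trivial facts: a semicircle, being strictly above the line except at its endpoints (which are vertices), never meets a path segment in its interior; and a path edge, spanning two consecutive positions with no vertex strictly between them, never interleaves with any edge. Hence the drawing is crossing-free. Conversely, from a crossing-free semicircle drawing I would read off $<$ as the left-to-right order along the line: the line being a Hamiltonian path makes consecutive vertices adjacent, and the absence of crossings means, by the claim, that no two edges interleave, which is exactly the disjunction of the four inequalities. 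Hence $G$ is path-outerplanar with witness $<$.

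The technical heart, and the step I expect to be the main obstacle, is the geometric claim itself. Here I would use the elementary identity that a point $(x,y)$ with $y\ge 0$ lies in the closed half-disk of diameter $[a,b]$ exactly when $(x-a)(x-b)+y^2\le 0$, while every point of the semicircle over $[c,d]$ satisfies $(x-c)(x-d)+y^2=0$. Substituting, a point of the second arc lies inside the closed half-disk of the first precisely when the affine function $x\mapsto (x-a)(x-b)-(x-c)(x-d)$ is nonpositive; at the two relevant abscissae this function takes the values $(c-a)(c-b)$ and $(d-a)(d-b)$. In the nested case both of these values are $\le 0$, so by affineness the whole second arc stays inside the first closed half-disk, touching its boundary only at a shared vertex, which rules out any interior crossing; the disjoint case is even simpler, as the two arcs then have essentially disjoint horizontal extents and can meet at most at a common endpoint. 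In the interleaving case exactly one of $c,d$ lies strictly inside $(a,b)$, so one endpoint of the second arc lies strictly below the first arc while the other lies outside the first closed half-disk; since the interior of the second arc is strictly above the line and therefore cannot exit through the diameter, a Jordan-curve/continuity argument forces it to cross the first arc. Carefully handling the boundary cases in which two edges share a vertex—so that shared endpoints are not counted as crossings—is the only delicate bookkeeping. Once the claim is established in this clean ``disjoint-or-nested versus interleaving'' form, the lemma follows immediately.
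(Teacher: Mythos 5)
Your proof is correct and follows essentially the same route as the paper: identify the four inequalities with the ``disjoint or nested'' condition on the intervals $[a,b]$ and $[c,d]$, and translate this into non-crossing of the semicircles in both directions. The only difference is that you prove in detail the geometric fact that two upper semicircles cross exactly when their diameters strictly interleave, which the paper's proof treats as self-evident; your algebraic argument for it is sound.
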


\begin{proof}
Assume that $G$ is path-outerplanar with witness $(1,\dots, n)$. The vertices $\{1,\dots,n\}$ can be drawn on a horizontal line, with vertex~$i$ at coordinate~$i$. Each edge $\{a,b\}$ of $G$ can be drawn as a semi-circle, with endpoints $a$ and $b$, above the line. The drawing is planar since, by definition, for every two edges $\{a,b\}, \{c,d\}$ of $G$ with $a<b$ and $c<d$, we have $|[a,b]\cap[c,d]|\leq 1$, or $[a,b]\subset [c,d]$, or $[c,d]\subset [a,b]$. 
Conversely, let us assume that there exists a planar drawing as described. Let us then consider the total order induced by the placement of the vertices along this horizontal path (say, from left to right). Since no two semi-circles cross, for every pair $\{a,b\}, \{c,d\}$ of edges, one of the four inequalities in Definition~\ref{def:path-outerplanar} must be satisfied, and thus the graph is path-outerplanar.
\end{proof}

\begin{lemma}\label{lem:PLS-path-outerplanar}
There is a 1-round proof-labeling scheme for path-outerplanarity, with certificates on $O(\log n)$ bits in $n$-node networks.
\end{lemma}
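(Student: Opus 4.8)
The plan is to have the prover certify two things: that the graph admits a Hamiltonian ordering $P=(V,<)$ realizing the path of Definition~\ref{def:path-outerplanar}, and that, with respect to this ordering, the non-path edges (the \emph{arcs}) are pairwise non-crossing. I would encode the ordering by giving each node $v$ its rank $r(v)\in\{1,\dots,n\}$ and certify it with the classical distance/parent spanning-tree scheme rooted at the rank-$1$ node (distance $r(v)-1$, parent $=$ the neighbour of rank $r(v)-1$). Since in one round a node learns which neighbours point to it as a parent, I would additionally require that every node has at most one child and that ranks $1$ and $n$ are the only endpoints; this upgrades the certified spanning tree into a certified Hamiltonian path. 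After this stage each node knows its own rank and, by exchanging ranks once, the rank of each neighbour, so it can classify each incident edge as a path edge, a left-arc, or a right-arc.

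The core of the scheme is certifying non-crossing. For each gap $g_i$ between consecutive vertices $i$ and $i+1$ the prover stores at node $i$ the number $D_i$ of arcs crossing $g_i$, and for each arc $e$ a \emph{nesting depth} $\delta(e)$ equal to one plus the number of arcs containing it. Because the arcs crossing a fixed gap of a non-crossing diagram form a nested chain, the open arcs at $g_i$ must carry exactly the depths $\{1,\dots,D_i\}$. This yields purely local checks at each node $i$: a counting check $D_i=D_{i-1}-k^-+k^+$, where $k^-,k^+$ count the left- and right-arcs at $i$; and a \emph{stack-discipline} check that the left-arcs closing at $i$ carry exactly the top depths $\{m_i,\dots,D_{i-1}\}$ and the right-arcs opening at $i$ carry $\{m_i,\dots,D_i\}$, each assigned monotonically in the rank of the far endpoint, where $m_i=D_{i-1}-k^-+1$. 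Completeness is immediate by taking the true crossing counts and nesting depths. For soundness I would show that a crossing pair $e_1=\{a,b\}$, $e_2=\{c,d\}$ with $a<c<b<d$ cannot survive the checks: the opening check at $c$ forces $\delta(e_1)<\delta(e_2)$ (as $e_1$ merely passes through $c$ while $e_2$ opens above it), whereas the closing check at $b$ forces $\delta(e_2)<\delta(e_1)$ (as $e_1$ closes while $e_2$, still open, must lie below the closing arcs), a contradiction. Hence any graph passing all checks has a Hamiltonian ordering with non-crossing arcs, and is therefore path-outerplanar by Definition~\ref{def:path-outerplanar} and Lemma~\ref{lem:path-outerplanar-drawing}.

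The step I expect to be the main obstacle is fitting everything into $O(\log n)$ bits per node, because the stack-discipline check at a node $i$ needs the depths of \emph{all} its incident arcs, and $i$ may have large degree (for instance a fan), so it cannot itself store one depth per incident arc. I would resolve this using the fact that outerplanar graphs are sparse, indeed $2$-degenerate: the prover orients each arc towards an ``owner'' endpoint so that every node owns at most two arcs, and stores $\delta(e)$, tagged with the far endpoint, only at its owner. A node then recovers the depths of its non-owned incident arcs from the certificates of its neighbours during the single verification round, so each certificate carries only the rank, the spanning-tree data, the value $D_i$, and $O(1)$ owned arc-depths, i.e.\ $O(\log n)$ bits in total. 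The verifier additionally checks that ownership is consistent (each arc owned by exactly one endpoint) and that no node owns more than two arcs; this never blocks a genuine path-outerplanar graph, whose $2$-degeneracy guarantees a valid ownership, while for non-members rejection is already forced by the path check or the non-crossing check.
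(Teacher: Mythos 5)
Your construction is genuinely different from the paper's (which certifies, at each vertex $x$, the shortest arc $I(x)$ strictly covering $x$, and derives a contradiction from a path of ``maximal edges'' joining the two arcs of a minimal crossing pair). Your certificate --- a per-gap crossing count $D_i$ plus a per-arc nesting depth, distributed via $2$-degeneracy --- is plausible, and as far as I can tell the resulting verifier is in fact sound. But the soundness argument you sketch has a genuine hole. The two key implications, ``the opening check at $c$ forces $\delta(e_1)<\delta(e_2)$'' and ``the closing check at $b$ forces $\delta(e_2)<\delta(e_1)$'', do not follow from your checks: node $c$ runs its stack-discipline test only on arcs \emph{incident} to $c$, and $e_1=\{a,b\}$ with $a<c<b$ is not one of them, so nothing verified at $c$ relates $\delta(e_1)$ to $\delta(e_2)$; symmetrically, node $b$ constrains nothing about $e_2$. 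The parenthetical justification (``$e_1$ merely passes through $c$ while $e_2$ opens above it'') presupposes that the depth labels describe a consistent stack execution across \emph{all} arcs over a gap, which is precisely what a cheating prover would violate and which no node can check for arcs passing over it. Even the natural invariant $\delta(e)\le D_i$ for every gap $g_i$ spanned by $e$ is not enforced by your tests, because distinct arcs may legitimately carry the same depth, so at an intermediate node a different arc can ``absorb'' the depth slot that $e$ occupies.

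The scheme can be salvaged, but by a different, more global argument, which is the real content of the soundness proof and is missing from your writeup. The counting checks pin each $D_i$ to the true number of arcs over gap $g_i$, and the monotone assignment forces $\delta(e)$, for $e=\{u,v\}$ with $u<v$, to equal at $u$ the quantity $D_u$ minus the number of shorter right-arcs at $u$, and at $v$ the quantity $D_{v-1}$ minus the number of shorter left-arcs at $v$. A case analysis over the possible relative positions of another arc $e'$ shows that the difference of these two quantities equals the number of arcs crossing $e$ from the left minus the number crossing it from the right; since $\delta(e)$ is a single stored value, acceptance forces these two crossing counts to coincide for every arc. One then concludes that no crossings exist by induction: the arc with the smallest left endpoint has no left-crossers, hence no crossers at all, and can be deleted without changing the counts of the others. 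With this replacement for your two-inequality argument, the rest of your proposal (path certification, ownership via $2$-degeneracy, and the $O(\log n)$ accounting) goes through.
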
 

\begin{proof}
Given a path-outerplanar graph $G$, the prover computes a witness $P=(v_1,\dots,v_n)$ for~$G$, and sends an $O(\log n)$-bit certificate to every vertex~$x$, with the following information:
\begin{enumerate}
\item the number $n$ of vertices of the graph;
\item the rank of $x$ in $P$, i.e., the value $i$ such that $x=v_i$;
\item the shortest interval $I(x) = [a,b]$ such that $\{v_a,v_b\}$ is an edge of G, and $a<i<b$ (if no such interval exists, then the prover sets $I(x)$ to $[0,n+1]$).
\end{enumerate}
In addition, the prover  provides the nodes with information in their certificates enabling them to check that $n$ is indeed the correct number of vertices, that the vertices are correctly ranked from $1$ to $n$, and that this ranking induces a path. We do not detail these parts of the certificates, nor we describe the verification algorithm for these certificates, as this can be achieved using standard techniques~\cite{KormanKP10} using $O(\log n)$-bit certificates. 

For simplicity, in the algorithm as well as in the remaining part of the proof, we denote the vertices by their ranks, as numbers from~$1$ to~$n$. We also add two virtual vertices $0$ and $n+1$, with virtual edges $\{0,1\}, \{0,n+1\}$, and $\{n,n+1\}$. We set  $I(0)=I(n+1)=[-\infty, \infty]$. This ensures that any vertex $x$ with $1 \leq x \leq n$ has at least one neighbor smaller than itself, and another larger than itself. Of course, the verification algorithm is only performed at the real vertices, i.e., those from $1$ to $n$, with node~$1$ (resp., node~$n$)  simulating the behavior of its virtual neighbor~$0$ (resp.,~$n+1$).  

The verification algorithm performed at each node $x$ is displayed in Algorithm~\ref{al:path-outerplanar}. 

\begin{figure}[h]
\begin{center}
\small
\begin{minipage}{.86\linewidth}
\begin{algorithm}[H]
\caption{\small Verification procedure for path-outerplanarity at node $x$, $1 \leq x \leq n$.}
\label{al:path-outerplanar}
let $x^-_\ell <  \dots< x^-_0 < x^+_0 < \dots < x^+_k$, with $\ell\geq 0$ and $k\geq 0$, be the neighbors of $x$; \\
collect the certificates of each neighbor: graph size, rank, and interval;  \\
check that the ranks correspond to a spanning path of size $n$; \label{l:tPath}  \\
let $I(x) = [a,b]$;  \\
check that $a < x < b$, and that all neighbors of $x$ are in the interval $[a,b]$;\label{l:tInt}  \\
\For{$i=0$ \textrm{\bf to} $k-1$}
{
	check that $I(x^+_i)=[x,x^+_{i+1}]$;\label{l:tRight}
}	
\For{$i=0$ \textrm{\bf to} $\ell-1$}
{
	check that $I(x^-_i)=[x^-_{i+1},x$]\label{l:tLeft};
}
\If{$x^+_k < b$}
{
	check that $I(x^+_k)=[a,b]$\label{l:tRight-ab};
}
\If{$x^-_\ell > a$}
{
	check that $I(x^-_\ell)=[a,b]$\label{l:tLeft-ab};
}
\For{\rm every neighbor $y$ of $x$}
{
	\If{\rm one of the endpoints of $I(y)$ is $x$}
	{
		check that the other endpoint of $I(y)$ is adjacent to $x$\label{l:tExtreme1};\\
		check that $I(y) \subsetneq I(x)$\label{l:tExtreme2};
	}
}
\textbf{if} all checks are passed \textbf{then} accept \textbf{else} reject.
\end{algorithm}
\end{minipage}
\end{center}
\end{figure}

We first show completeness. 

\begin{claim}
If $G$ is path-outerplanar, and if the prover provides certificates corresponding to a witness $P$, then Algorithm~\ref{al:path-outerplanar} accepts at all nodes.
\end{claim} 

\begin{proofofclaim}
For any node $x$, let $x^-_\ell <  \dots< x^-_0 < x^+_0 < \dots < x^+_k$, with $\ell\geq 0$ and $k\geq 0$, be the neighbors of $x$ in the ordering $P$. In particular, $x^-_0 < x < x^+_0$. 
Figure~\ref{fig:path-outerplanar} illustrates the structure of the neighborhood of a node~$x$, and the interval $I(x)=[a,b]$ (cf. Lemma~\ref{lem:path-outerplanar-drawing}). Note that, since $G$ is planar, $a\leq x^-_\ell$, and $b\geq x^+_k$.
The test of Line~\ref{l:tPath} succeeds since $P$ is the witness for~$G$. Observe that the vertices $y$ with $I(y)=[a,b]$ are exactly the vertices incident to the unique face below the edge~$\{a,b\}$, except $a$ and $b$ --- this also holds for the nodes $0$ and $n+1$, as we could add  a virtual edge between $-\infty$ and~$\infty$. In particular, no edge incident to $x$ can cross the edge $\{a,b\}$, which implies that the test of  Line~\ref{l:tInt} is passed. The same arguments applied to $x^+_i$ explains why the test of Line~\ref{l:tRight} is passed (respectively, $x^-_i$ and Line~\ref{l:tLeft}). For the rightmost neighbor $x^+_k$ of $x$, we distinguish between the cases  $x^+_k < b$, and $x^+_k = b$. In the first case, $x_k$ is also incident to the face below the edge $\{a,b\}$, and thus the test of Line~\ref{l:tRight-ab} succeeds. The same arguments apply for the test at Line~\ref{l:tLeft-ab}. The case $x_k=b$ is not explicitly addressed by Algorithm~\ref{al:path-outerplanar} applied on $x$. However, it is part of the tests performed at Lines~\ref{l:tExtreme1} and~\ref{l:tExtreme2}, when applied to node $x^+_k$. These two tests succeed by definition of $I(x)$ and $I(y)$. It follows that node~$x$ accepts, as desired. 
 \end{proofofclaim}
 
 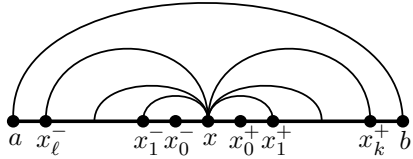
\begin{figure}[htb]
\begin{center}
\scalebox{0.9}{
\tikzset{every picture/.style={line width=0.75pt}} 

\begin{tikzpicture}[x=0.75pt,y=0.75pt,yscale=-0.6,xscale=0.6]

\draw [line width=1.5]    (155,210.25) -- (515,210.25) ;

\draw  [fill={rgb, 255:red, 0; green, 0; blue, 0 }  ,fill opacity=1 ] (480,210.25) .. controls (480,207.49) and (482.24,205.25) .. (485,205.25) .. controls (487.76,205.25) and (490,207.49) .. (490,210.25) .. controls (490,213.02) and (487.76,215.25) .. (485,215.25) .. controls (482.24,215.25) and (480,213.02) .. (480,210.25) -- cycle ;
\draw  [fill={rgb, 255:red, 0; green, 0; blue, 0 }  ,fill opacity=1 ] (390,210.25) .. controls (390,207.49) and (392.24,205.25) .. (395,205.25) .. controls (397.76,205.25) and (400,207.49) .. (400,210.25) .. controls (400,213.02) and (397.76,215.25) .. (395,215.25) .. controls (392.24,215.25) and (390,213.02) .. (390,210.25) -- cycle ;
\draw  [fill={rgb, 255:red, 0; green, 0; blue, 0 }  ,fill opacity=1 ] (360,210.25) .. controls (360,207.49) and (362.24,205.25) .. (365,205.25) .. controls (367.76,205.25) and (370,207.49) .. (370,210.25) .. controls (370,213.02) and (367.76,215.25) .. (365,215.25) .. controls (362.24,215.25) and (360,213.02) .. (360,210.25) -- cycle ;
\draw  [fill={rgb, 255:red, 0; green, 0; blue, 0 }  ,fill opacity=1 ] (330,210.25) .. controls (330,207.49) and (332.24,205.25) .. (335,205.25) .. controls (337.76,205.25) and (340,207.49) .. (340,210.25) .. controls (340,213.02) and (337.76,215.25) .. (335,215.25) .. controls (332.24,215.25) and (330,213.02) .. (330,210.25) -- cycle ;
\draw  [fill={rgb, 255:red, 0; green, 0; blue, 0 }  ,fill opacity=1 ] (300,210.25) .. controls (300,207.49) and (302.24,205.25) .. (305,205.25) .. controls (307.76,205.25) and (310,207.49) .. (310,210.25) .. controls (310,213.02) and (307.76,215.25) .. (305,215.25) .. controls (302.24,215.25) and (300,213.02) .. (300,210.25) -- cycle ;
\draw  [fill={rgb, 255:red, 0; green, 0; blue, 0 }  ,fill opacity=1 ] (270,210.25) .. controls (270,207.49) and (272.24,205.25) .. (275,205.25) .. controls (277.76,205.25) and (280,207.49) .. (280,210.25) .. controls (280,213.02) and (277.76,215.25) .. (275,215.25) .. controls (272.24,215.25) and (270,213.02) .. (270,210.25) -- cycle ;
\draw  [fill={rgb, 255:red, 0; green, 0; blue, 0 }  ,fill opacity=1 ] (180,210.25) .. controls (180,207.49) and (182.24,205.25) .. (185,205.25) .. controls (187.76,205.25) and (190,207.49) .. (190,210.25) .. controls (190,213.02) and (187.76,215.25) .. (185,215.25) .. controls (182.24,215.25) and (180,213.02) .. (180,210.25) -- cycle ;
\draw  [fill={rgb, 255:red, 0; green, 0; blue, 0 }  ,fill opacity=1 ] (510,210.25) .. controls (510,207.49) and (512.24,205.25) .. (515,205.25) .. controls (517.76,205.25) and (520,207.49) .. (520,210.25) .. controls (520,213.02) and (517.76,215.25) .. (515,215.25) .. controls (512.24,215.25) and (510,213.02) .. (510,210.25) -- cycle ;
\draw  [fill={rgb, 255:red, 0; green, 0; blue, 0 }  ,fill opacity=1 ] (150,210.25) .. controls (150,207.49) and (152.24,205.25) .. (155,205.25) .. controls (157.76,205.25) and (160,207.49) .. (160,210.25) .. controls (160,213.02) and (157.76,215.25) .. (155,215.25) .. controls (152.24,215.25) and (150,213.02) .. (150,210.25) -- cycle ;
\draw [color={rgb, 255:red, 0; green, 0; blue, 0 }  ,draw opacity=1 ]   (155,205.25) .. controls (154.17,64.75) and (513.5,63.42) .. (515,210.25) ;

\draw    (185,210.25) .. controls (185,120.75) and (335.67,119.42) .. (335,210.25) ;

\draw    (335,210.25) .. controls (335,120.75) and (484.33,119.42) .. (485,210.25) ;

\draw    (275,210.25) .. controls (274.33,178.75) and (335.67,178.75) .. (335,210.25) ;

\draw    (335,210.25) .. controls (334.33,178.75) and (395.67,178.75) .. (395,210.25) ;

\draw    (230,210.25) .. controls (230.33,165.42) and (336.33,166.75) .. (335,210.25) ;

\draw    (335,210.25) .. controls (335.33,165.42) and (441.33,166.75) .. (440,210.25) ;

\draw (337,227.25) node    {$x$};
\draw (369,227.75) node    {$x^{+}_{0}$};
\draw (399,227.75) node    {$x^{+}_{1}$};
\draw (489,227.75) node    {$x^{+}_{k}$};
\draw (517,227.25) node    {$b$};
\draw (157,227.25) node    {$a$};
\draw (309,227.75) node    {$x^{-}_{0}$};
\draw (281,227.75) node    {$x^{-}_{1}$};
\draw (191,227.75) node    {$x^{-}_{\ell}$};

\end{tikzpicture}
}
\vspace{-2ex}
\caption{The neighborhood of a node $x$, and the edge $\{a,b\}$ covering $x$.}\label{fig:path-outerplanar}
\end{center}
\end{figure}

It remains to establish soundness. Namely, we prove that, if Algorithm~\ref{al:path-outerplanar} accepts at all nodes, then $G$ is path-outerplanar. As mentioned before, we do not detail the test for the spanning path (Line~\ref{l:tPath}), but this test ensures that the ranks are consistent, from $1$ to $n$, and that consecutive vertices  in this order are adjacent in $G$.  We thus safely denote the vertices by their ranks.  In fact, we show that if Algorithm~\ref{al:path-outerplanar} accepts at all nodes, then $G$ is path-outerplanar with witness $P = (1,2,\dots,n)$. 
Let us assume, for the purpose of contradiction, that this is not the case. It follows that there are four nodes $x<y<z<t$ such that both edges $\{x,z\}$ and $\{y,t\}$ appear in $G$. Under these conditions, let us choose such four nodes with the additional conditions that (i)~$z-y$ is minimum, and (ii)~$t-x$ is minimum subject to~(i). The following claim is a straightforward consequence of conditions~(i) and~(ii) --- see Figure~\ref{fig:interlaced} for an illustration of the framework of the claim. 

\begin{figure}[htb]
\begin{center}
\scalebox{0.8}{
\tikzset{every picture/.style={line width=0.75pt}} 

\begin{tikzpicture}[x=0.75pt,y=0.75pt,yscale=-0.6,xscale=0.6]

\draw [line width=1.5]    (151,200.25) -- (511,200.25) ;

\draw  [fill={rgb, 255:red, 0; green, 0; blue, 0 }  ,fill opacity=1 ] (225,200) .. controls (225,197.24) and (227.24,195) .. (230,195) .. controls (232.76,195) and (235,197.24) .. (235,200) .. controls (235,202.76) and (232.76,205) .. (230,205) .. controls (227.24,205) and (225,202.76) .. (225,200) -- cycle ;
\draw  [fill={rgb, 255:red, 0; green, 0; blue, 0 }  ,fill opacity=1 ] (506,200.25) .. controls (506,197.49) and (508.24,195.25) .. (511,195.25) .. controls (513.76,195.25) and (516,197.49) .. (516,200.25) .. controls (516,203.02) and (513.76,205.25) .. (511,205.25) .. controls (508.24,205.25) and (506,203.02) .. (506,200.25) -- cycle ;
\draw  [fill={rgb, 255:red, 0; green, 0; blue, 0 }  ,fill opacity=1 ] (146,200.25) .. controls (146,197.49) and (148.24,195.25) .. (151,195.25) .. controls (153.76,195.25) and (156,197.49) .. (156,200.25) .. controls (156,203.02) and (153.76,205.25) .. (151,205.25) .. controls (148.24,205.25) and (146,203.02) .. (146,200.25) -- cycle ;
\draw [color={rgb, 255:red, 0; green, 0; blue, 0 }  ,draw opacity=1 ]   (151,195.25) .. controls (150.17,54.75) and (428.5,53.17) .. (430,200) ;

\draw [line width=1.5]  [dash pattern={on 1.69pt off 2.76pt}]  (180,200) .. controls (180,110.5) and (301.67,109.42) .. (301,200.25) ;

\draw [line width=1.5]  [dash pattern={on 1.69pt off 2.76pt}]  (310,200) .. controls (310.33,155.17) and (391.33,156.5) .. (390,200) ;

\draw [color={rgb, 255:red, 0; green, 0; blue, 0 }  ,draw opacity=1 ]   (230,200) .. controls (229.17,59.5) and (509.5,53.42) .. (511,200.25) ;

\draw  [fill={rgb, 255:red, 0; green, 0; blue, 0 }  ,fill opacity=1 ] (425,200) .. controls (425,197.24) and (427.24,195) .. (430,195) .. controls (432.76,195) and (435,197.24) .. (435,200) .. controls (435,202.76) and (432.76,205) .. (430,205) .. controls (427.24,205) and (425,202.76) .. (425,200) -- cycle ;
\draw [line width=1.5]  [dash pattern={on 1.69pt off 2.76pt}]  (361,200.25) .. controls (361,110.75) and (481.67,109.42) .. (481,200.25) ;

\draw [line width=1.5]  [dash pattern={on 1.69pt off 2.76pt}]  (270,200) .. controls (270.33,155.17) and (351.33,156.5) .. (350,200) ;

\draw (513,220) node    {$t$};
\draw (153,217.25) node    {$x$};
\draw (233,220) node    {$y$};
\draw (433,220) node    {$z$};

\end{tikzpicture}
}
\end{center}
\vspace{-5ex}
\caption{The crossings involving edges depicted by dotted lines cannot exist.}\label{fig:interlaced}
\end{figure}
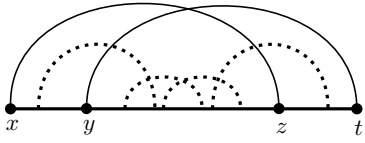

\begin{claim}\label{cl:non-cr}
For every edge $\{a,b\}$, if  $y<a<z$, then $y\leq b \leq z$. Also, if $y=a<b<t$, then $b\leq z$, and if $x< a <z=b$, then $a\geq y$. Finally, any two edges with both endpoints in the interval $[y,z]$ are non-crossing. \hfill $\diamond$
\end{claim}

To understand the structure of the graph in between $y$ and $z$, let us define:
\begin{itemize}
\item $y^+_j$, the rightmost neighbor of $y$ strictly before $t$ in the ordering $P$,
\item $z^-_i$, the leftmost neighbor of $z$  strictly after $x$ in the ordering $P$.
\end{itemize}

Note that these two nodes exist, as the node immediately on the right of $y$ in the path appears strictly before~$t$, and the node immediately on the left of $z$ in the path appears strictly after~$x$. 

\begin{claim}\label{claim:inequality-intervals}
The following holds: $y < y_j^+ < z^-_i <z $, $I(y^+_j)= [y,t]$, and $I(z^-_i)=[x,z]$.
\end{claim}

\begin{proofofclaim}
 By Claim~\ref{cl:non-cr}, the inequality $y < y^+_j \leq z$ holds. Also, as the test of Line~\ref{l:tRight} applied to vertex $y$ succeeds, it follows that $I(y^+_j)= [y,t]$. 
Observe that $\{y,z\}$ cannot be an edge of the graph. Indeed, if $\{y,z\}$ is an edge of the graph, then it must be the case that $z = y^+_j$, and thus $I(z)=[y,t]$. Also, symmetrically, the equality $I(y)=[x,z]$ must hold too. As the test of Line~\ref{l:tExtreme2} applied to vertex $y$ succeeds, it follows that $I(z) \subsetneq I(y)$, which contradicts the fact that the two intervals overlap. 
Therefore $\{y,z\}$ is not an edge of the graph. It follows that $y^+_j < z$. By the same arguments as for $y^+_j$, but applied to the left-hand side of $z$, it follows that $y<z^-_i$, and $I(z^-_i)=[x,z]$. Also observe that we have $y^+_j \leq z^-_i $, by Claim~\ref{cl:non-cr} applied to edges $\{y,y^+_j\}$ and $\{z^-_i,z\}$. 
Finally, as $I(y^+_j)= [y,t]$ and $I(z^-_i)=[x,z]$, and as these two intervals are different, it must be that $y^+_j<z^-_i $.
\end{proofofclaim}

In the following, an edge $\{u,v\}$  is said to be \emph{maximal} if $v$ is the rightmost neighbor of $u$, and $u$ is the leftmost neighbor of~$v$. See Figure~\ref{fig:extreme-edge}.

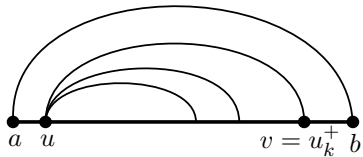
\begin{figure}[htb]
\begin{center}
\scalebox{0.9}{
\tikzset{every picture/.style={line width=0.75pt}} 

\begin{tikzpicture}[x=0.75pt,y=0.75pt,yscale=-0.6,xscale=0.6]

\draw [line width=1.5]    (171,210) -- (485,210) ;

\draw  [fill={rgb, 255:red, 0; green, 0; blue, 0 }  ,fill opacity=1 ] (435,210) .. controls (435,207.24) and (437.24,205) .. (440,205) .. controls (442.76,205) and (445,207.24) .. (445,210) .. controls (445,212.76) and (442.76,215) .. (440,215) .. controls (437.24,215) and (435,212.76) .. (435,210) -- cycle ;
\draw  [fill={rgb, 255:red, 0; green, 0; blue, 0 }  ,fill opacity=1 ] (196,210) .. controls (196,207.24) and (198.24,205) .. (201,205) .. controls (203.76,205) and (206,207.24) .. (206,210) .. controls (206,212.76) and (203.76,215) .. (201,215) .. controls (198.24,215) and (196,212.76) .. (196,210) -- cycle ;
\draw  [fill={rgb, 255:red, 0; green, 0; blue, 0 }  ,fill opacity=1 ] (480,210) .. controls (480,207.24) and (482.24,205) .. (485,205) .. controls (487.76,205) and (490,207.24) .. (490,210) .. controls (490,212.76) and (487.76,215) .. (485,215) .. controls (482.24,215) and (480,212.76) .. (480,210) -- cycle ;
\draw  [fill={rgb, 255:red, 0; green, 0; blue, 0 }  ,fill opacity=1 ] (166,210) .. controls (166,207.24) and (168.24,205) .. (171,205) .. controls (173.76,205) and (176,207.24) .. (176,210) .. controls (176,212.76) and (173.76,215) .. (171,215) .. controls (168.24,215) and (166,212.76) .. (166,210) -- cycle ;
\draw [color={rgb, 255:red, 0; green, 0; blue, 0 }  ,draw opacity=1 ]   (171,205) .. controls (170.17,64.5) and (483.5,68.17) .. (485,215) ;

\draw    (201,210) .. controls (201.5,112) and (439.83,113.17) .. (440,210) ;

\draw    (201,210) .. controls (202.5,146) and (379.5,141) .. (380,210) ;

\draw    (201,210) .. controls (202.17,161.17) and (340.17,162.5) .. (340,210) ;

\draw (436.5,227.5) node    {$v=u^{+}_{k}$};
\draw (487,230) node    {$b$};
\draw (173,227) node    {$a$};
\draw (203,227) node    {$u$};

\end{tikzpicture}
}
\end{center}
\vspace{-5ex}
\caption{A maximal edge $\{u,v\}$.}\label{fig:extreme-edge}
\end{figure} 

The next claim describes the structure of the graph between $y^+_j$ and $z^-_i $ (see Figure~\ref{fig:path_max_edges}). 

\begin{figure}[htb]
\begin{center}
\scalebox{0.8}{
\tikzset{every picture/.style={line width=0.75pt}} 

\begin{tikzpicture}[x=0.75pt,y=0.75pt,yscale=-0.6,xscale=0.6]

\draw [line width=1.5]    (90,264.53) -- (520,264.53) ;

\draw  [fill={rgb, 255:red, 0; green, 0; blue, 0 }  ,fill opacity=1 ] (145,264.53) .. controls (145,261.77) and (147.24,259.53) .. (150,259.53) .. controls (152.76,259.53) and (155,261.77) .. (155,264.53) .. controls (155,267.29) and (152.76,269.53) .. (150,269.53) .. controls (147.24,269.53) and (145,267.29) .. (145,264.53) -- cycle ;
\draw [color={rgb, 255:red, 0; green, 0; blue, 0 }  ,draw opacity=1 ]   (510,120) .. controls (515.5,120) and (151.5,88) .. (150,264.53) ;

\draw [line width=0.75]    (150,264.53) .. controls (148.5,196.53) and (218.5,194.53) .. (220,264.53) ;

\draw [color={rgb, 255:red, 0; green, 0; blue, 0 }  ,draw opacity=1 ]   (461,264.53) .. controls (459.5,94) and (99.5,113) .. (100,110) ;

\draw  [fill={rgb, 255:red, 0; green, 0; blue, 0 }  ,fill opacity=1 ] (456,264.53) .. controls (456,261.77) and (458.24,259.53) .. (461,259.53) .. controls (463.76,259.53) and (466,261.77) .. (466,264.53) .. controls (466,267.29) and (463.76,269.53) .. (461,269.53) .. controls (458.24,269.53) and (456,267.29) .. (456,264.53) -- cycle ;
\draw [line width=2.25]    (220,264.53) .. controls (220.33,219.69) and (281.33,221.03) .. (280,264.53) ;

\draw [line width=0.75]    (391,264.53) .. controls (389.5,196.53) and (459.5,194.53) .. (461,264.53) ;

\draw [line width=2.25]    (280,264.53) .. controls (280.33,219.69) and (331.33,221.5) .. (330,265) ;

\draw [line width=2.25]    (330,265) .. controls (330.33,220.17) and (392.33,221.03) .. (391,264.53) ;

\draw  [fill={rgb, 255:red, 0; green, 0; blue, 0 }  ,fill opacity=1 ] (215,264.53) .. controls (215,261.77) and (217.24,259.53) .. (220,259.53) .. controls (222.76,259.53) and (225,261.77) .. (225,264.53) .. controls (225,267.29) and (222.76,269.53) .. (220,269.53) .. controls (217.24,269.53) and (215,267.29) .. (215,264.53) -- cycle ;
\draw  [fill={rgb, 255:red, 0; green, 0; blue, 0 }  ,fill opacity=1 ] (275,264.53) .. controls (275,261.77) and (277.24,259.53) .. (280,259.53) .. controls (282.76,259.53) and (285,261.77) .. (285,264.53) .. controls (285,267.29) and (282.76,269.53) .. (280,269.53) .. controls (277.24,269.53) and (275,267.29) .. (275,264.53) -- cycle ;
\draw  [fill={rgb, 255:red, 0; green, 0; blue, 0 }  ,fill opacity=1 ] (325,265) .. controls (325,262.24) and (327.24,260) .. (330,260) .. controls (332.76,260) and (335,262.24) .. (335,265) .. controls (335,267.76) and (332.76,270) .. (330,270) .. controls (327.24,270) and (325,267.76) .. (325,265) -- cycle ;
\draw  [fill={rgb, 255:red, 0; green, 0; blue, 0 }  ,fill opacity=1 ] (386,264.53) .. controls (386,261.77) and (388.24,259.53) .. (391,259.53) .. controls (393.76,259.53) and (396,261.77) .. (396,264.53) .. controls (396,267.29) and (393.76,269.53) .. (391,269.53) .. controls (388.24,269.53) and (386,267.29) .. (386,264.53) -- cycle ;
\draw  [dash pattern={on 0.84pt off 2.51pt}]  (220,180) -- (220,264.53) ;

\draw  [dash pattern={on 0.84pt off 2.51pt}]  (280,180) -- (280,264.53) ;

\draw  [dash pattern={on 0.84pt off 2.51pt}]  (330,180) -- (330,265) ;

\draw  [dash pattern={on 0.84pt off 2.51pt}]  (390,180) -- (391,264.53) ;

\draw (153,284.53) node    {$y$};
\draw (463,284.53) node    {$z$};
\draw (221,287.5) node    {$y^{+}_{j}$};
\draw (391,287.5) node    {$z^{-}_{i}$};
\draw (308.5,200) node  [font=\large,color={rgb, 255:red, 0; green, 0; blue, 0 }  ,opacity=1 ]  {$Q$};

\end{tikzpicture}
}
\end{center}
\vspace{-5ex}
\caption{The path $Q$ of maximal edges between $y$ and $z$.}\label{fig:path_max_edges}
\end{figure}
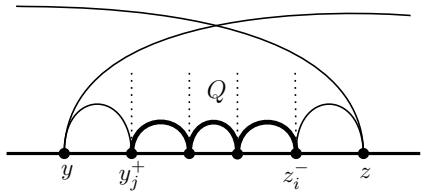

\begin{claim}\label{cl:path-max}
There is a path $Q=(q_1,\dots, q_s)$ in $G$, with $s > 1$, $q_1=y^+_j$, and $q_s = z^-_i$, such that,  for every $i=1,\dots,s-1$, $q_i<q_{i+1}$ and $\{q_i,q_{i+1}\}$ is a maximal edge.
\end{claim}

\begin{proofofclaim}
The path is constructed as follows: $q_1= y^+_j$ and, as long as the rightmost vertex $q_i$ of the partial path created so far satisfies  $q_i<z^-_i$, we pick $q_{i+1}$ as the rightmost neighbor of $q_i$. We claim that $q_{i+1} \leq z^-_i$. This is because:
\begin{enumerate}
\item by Claim~\ref{cl:non-cr}, it cannot be the case that $q_{i+1}>z$;
\item $q_{i+1}=z$ cannot hold because $q_i$ would then contradict the choice of $z^-_i$ as  leftmost neighbor of $z$ larger than $x$;
\item $z^-_i < q_{i+1}<z$  is impossible as the crossing edges $\{q_i,q_{i+1}\}$ and $\{z^-_i,z\}$ would contradict Claim~\ref{cl:non-cr}.
\end{enumerate}
Therefore the construction of $Q$ correctly terminates at vertex $z^-_i$.

It remains to show that edges $\{q_i,q_{i+1}\}$ are maximal for all $i=1,\dots,s-1$. By construction, $q_{i+1}$ is the rightmost neighbor of $q_i$, and thus it is sufficient to prove that $q_i$ is the leftmost neighbor of $q_{i+1}$. Assume that $q_{i+1}$ has a neighbor $q'<q_i$. As for the construction of $Q$, observe that $q' \geq y$ by Claim~\ref{cl:non-cr} applied to the edge $\{q',q_{i+1}\}$. Also, $q'$ cannot be one of the vertices $y\in\{q_1,\dots,q_i\}$ because,  for every $j=1,\dots,i$, $q_j$ is the rightmost neighbor of $q_{j-1}$ smaller than $z$. Eventually, $q'$ cannot be strictly between $q_{j-1}$ and $q_j$ for some $j$ with $1 \leq j \leq i$, by Claim~\ref{cl:non-cr} applied to edges $\{q',q_i\}$ and $\{q_{j-1},q_j\}$. Therefore, $q_i$ is the leftmost neighbor of $q_{i+1}$, and all edges $\{q_i,q_{i+1}\}$ are maximal.
\end{proofofclaim}

We were interested in maximal edges because of the following claim.

\begin{claim}\label{cl:extreme-edge}
If the edge $\{u,v\}$ is maximal, and Algorithm~\ref{al:path-outerplanar} accepts at all nodes, then $I(u) = I(v)$.
\end{claim}

\begin{proofofclaim}
We refer to Figure~\ref{fig:extreme-edge}  for an illustration of the arguments developed in the proof. Let $I(u)=[a,b]$ and $I(v)=[c,d]$. Let us assume, w.l.o.g., that $u <v$.  Since $\{u,v\}$ is maximal, $v=u^+_k$ where $u^+_k$ is the largest neighbor of $u$. Since Algorithm~\ref{al:path-outerplanar} accepts at all nodes, all tests performed by the algorithm at all nodes have positive outcome. Therefore, by the test of Line~\ref{l:tInt} applied to $x=u$, we derive that $u^+_k \leq b$, and thus $v\leq b$. Now let us consider three cases separately. 
\begin{enumerate}
\item If $v<b$, then the test of Line~\ref{l:tRight-ab} applied to $u$ guarantees that $I(u^+_k) = I(u)$, and the claim follows. 
\item If $v=b$ but $u = v^-_\ell>c$, where $v^-_\ell$ is the leftmost neighbor of $v$, then, by symmetry,  the test of Line~\ref{l:tLeft} applied to $x=v$ implies that $I(u) = I(v)$, as claimed.
\item If $v = b$ and $u = c$, then the test of Line~\ref{l:tExtreme2} applied at $x=u$, and at $x'=b=v$ implies that  $I(v) \subsetneq I(u)$. Symmetrically, by the same test at $x=v$ and $x'=u=c$, we also get that $I(u) \subsetneq I(v)$. This is a contradiction, and thus this case cannot appear.
\end{enumerate}   
This completes the proof of Claim~\ref{cl:extreme-edge}. 
\end{proofofclaim}

We can now complete the soundness proof for our proof-labeling scheme. 
By Claims~\ref{cl:path-max} and~\ref{cl:extreme-edge}, all the vertices of the path $Q$ (formed of maximal edges) were assigned the same interval~$I$. This contradicts the fact that, according to Claim~\ref{claim:inequality-intervals}, $I(q_1)=[y,t]$ and $I(q_s)=[x,z]$. This contradiction completes the proof of Lemma~\ref{lem:PLS-path-outerplanar}. 
\end{proof}

\subsection{From path-outerplanar  graphs to planar graphs}
\label{se:transform}

The previous section has demonstrated how to construct a compact proof-labeling scheme for the class of path-outerplanar graphs. 
For extending the scheme to planar graphs, this section presents a transformation that maps any planar graph to a path-outerplanar graph. 
To get an intuition of this transformation, let us first explain how a tree~$T$ can be transformed into a path. The transformation is inspired by the classical approximation algorithm by Christofides for the traveling salesman problem. 
A depth-first search (DFS) traversal of the tree returns an ordering of the nodes by indices from~1 to $2n-1$, where a same node can be assigned several indices. This ordering can be transformed into a path on $2n-1$ nodes, by creating a copy of each node for each time it is visited by the DFS traversal, and linking such nodes according to the ordering of the traversal. 

More formally, let $T$ be a tree spanning a planar graph $G$. We fix a drawing of $G$ in the plane without crossing edges (see Figure~\ref{fig:tree1}(a), ignoring node~$r'$ at this stage of the proof). We can assume, w.l.o.g., that all edges are drawn as straight lines (it is not crucial in the construction, but it helps for understanding it). The tree $T$ is rooted at an arbitrary vertex~$r$. For any vertex $v\neq r$, its neighbor on the path from $v$ to $r$ is called the \emph{parent} of~$v$, and the other neighbors of~$v$ are called its \emph{children}. For any node $v$, $\deg_T(v)$ is the degree of $v$ in $T$. A DFS traversal of  $T$  starting from $r$ provides a \emph{DFS-mapping} $f : \{1, \dots, 2n-1\} \rightarrow V(T)$  satisfying the following: $f(1)  = r$, and, for $1 \leq i  <2n-1$, 
\[
f(i+1)=\left\{\begin{array}{ll}
\mbox{a child of $f(i)$ that is not in $f(\{1,\dots,i-1\})$} & \mbox{if there is such a node};\\
\mbox{the parent of $f(i)$} & \mbox{otherwise.}
\end{array}\right.
\]
Note that $f$ is onto but not one-to-one.  In particular, each vertex $v$ of $T$, different from the root is mapped by $f^{-1}$ to $\deg_T(v)$ different vertices of the path $P =(1, 2,\dots, 2n-1)$ (see Figure~\ref{fig:tree1}(b)). The root~$r$ is mapped to $\deg_T(r)+1$ vertices of~$P$. Intuitively, each edge of~$T$ is  mapped by $f^{-1}$ on exactly two edges of~$P$. Note that if  $1\leq i,j \leq 2n-1$ are two integers such that $i$ is the least integer satisfying $f(i)  = v$, and $j$ is the largest  integer satisfying $f(j)  = v$,  then,  for every $i \leq k \leq j$,  $f(k)$ is either a descendent of $v$, or $v$ itself.  

\begin{figure}[htb]
\centering
\scalebox{0.6}{
\input{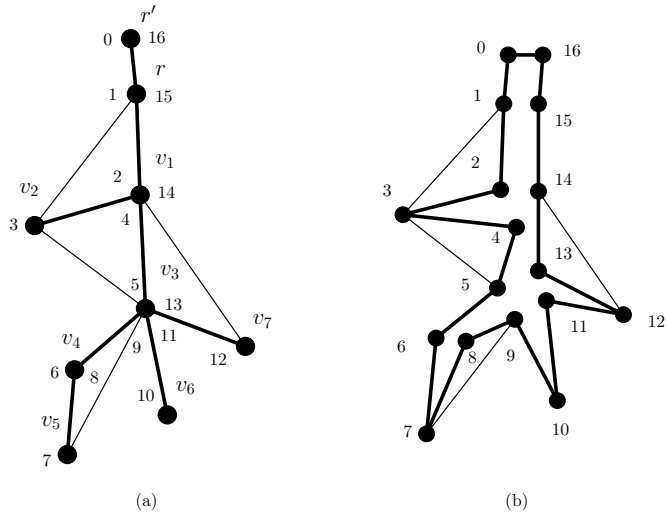}
}
\caption{Transformation of a planar graph to a path planar graphs.}\label{fig:tree1}
\end{figure}

Given $(G,T,f)$, we define a class of graphs induced by $(G,T,f)$ in such a way that: (1)~if $G$ is planar, then there exists a choice of $T$ and $f$ for which a graph induced by $(G,T,f)$ is path-outerplanar, and (2)~if $G$ is not planar, then for every $T$ and $f$, no graph induced by $(G,T,f)$ is path-outerplanar. Figure~\ref{fig:tree1} provides an illustration of the definition below (again, ignore node~$r'$ at this stage of the proof).

\begin{definition}\label{def:induced}
A  graph $H$  is  said to be \emph{induced} by $(G, T, f)$ if $V(H) =  \{1, \dots, 2n-1  \}$, and  (1)~$\{i,i+1\}\in E(H)$ for every $1\leq i < 2n-1$,  and (2)~for every cotree edge $\{u, v \}\in E(G)  \setminus E(T)$, there exists a unique edge $\{ i, j\} \in E(H)$ such that $\{f(i), f(j) \}  =  \{u, v \}$. 
\end{definition}

Note that the definition does not imply the uniqueness of $H$ as a graph induced by $(G,T,f)$. For example, in Figure~\ref{fig:tree1}(b), if we replace the edge between nodes $7$ and $9$ by an edge between $7$ and $5$, we get another graph induced by the graph on 
Figure~\ref{fig:tree1}(a) that satisfies Definition~\ref{def:induced}.

\begin{lemma}
\label{le:GTf}
For every planar graph $G$, and every spanning tree $T$ of $G$,  there exists a DFS-mapping $f$ of $T$, and a graph $G_{T,f}$ induced by $(G,T, f)$, such that $G_{T,f}$ is path-outerplanar.
\end{lemma}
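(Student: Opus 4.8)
The plan is to build $f$ and $H$ directly from a fixed planar drawing of $G$, exploiting the fact that a DFS of $T$ is nothing but a walk around the boundary of the tree. First I would fix a crossing-free drawing of $G$ in the plane (which exists since $G$ is planar) and record the induced rotation system, i.e., the cyclic order of the edges around each vertex. I root $T$ at $r$ and take $f$ to be the Euler tour of $T$ produced by the DFS that, at each vertex, explores its tree-children in the order dictated by the rotation system; this is a legitimate DFS-mapping, since the definition of $f$ leaves the order of unexplored children free. Recall that a non-root vertex $v$ is then visited exactly $\deg_T(v)$ times and $r$ is visited $\deg_T(r)+1$ times, and each of these visits can be identified with one of the angular sectors (``corners'') delimited by two consecutive tree-edges around that vertex, in cyclic order.

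Next I would define $H=G_{T,f}$ by attaching each cotree edge to the correct corners. For a cotree edge $\{u,v\}\in E(G)\setminus E(T)$, the drawing leaves $u$ inside a unique corner of $u$ and leaves $v$ inside a unique corner of $v$; letting $i$ and $j$ be the two visits of $f$ corresponding to these corners, I put $\{i,j\}$ into $E(H)$. Together with the path edges $\{i,i+1\}$ this yields a graph $H$ for which I must check Definition~\ref{def:induced}: the map $f$ sends $\{i,j\}$ to $\{u,v\}$, and the corner-to-visit correspondence (a bijection, by the count above) makes the choice of $\{i,j\}$ unique for each cotree edge.

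To prove that $H$ is path-outerplanar I would invoke Lemma~\ref{lem:path-outerplanar-drawing} via a ``fatten and unroll'' argument. Thicken $T$ into a closed topological disk $D$ (a tree fattens to a disk); its boundary $\partial D$ is a simple closed curve passing, in cyclic order, exactly through the corners listed by the Euler tour, and every cotree edge now lives in the complementary disk (the closure of the sphere minus $D$), attached to $\partial D$ precisely at the corner points chosen above. Since the original drawing has no crossings, these cotree arcs are pairwise non-crossing chords of this complementary disk. Cutting $\partial D$ at the root and straightening it maps the ordered corners $1,2,\dots,2n-1$ onto a horizontal line, the complementary disk onto the upper half-plane, the path edges onto the line, and the non-crossing chords onto non-crossing semicircles above it. This is exactly a path-outerplanar drawing, so Lemma~\ref{lem:path-outerplanar-drawing} gives that $H$ is path-outerplanar.

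The main obstacle I anticipate is the special treatment of the root, which is the reason the auxiliary vertex $r'$ appears in Figure~\ref{fig:tree1}. The root is visited one extra time: cyclically $\partial D$ carries $2n-2$ corners, whereas the linear Euler tour lists $2n-1$ positions, the duplicated position being exactly the corner at which I cut $\partial D$. I must choose this cut (equivalently, place $r$ on the outer face and attach the dummy $r'$ in the outer corner) so that no cotree edge is incident to the corner through which $\partial D$ is opened; otherwise a single cotree edge could be split between the leftmost and rightmost endpoints of the line, and the ``semicircles above'' picture would fail. Making this cut canonical, and verifying that the corner-to-visit correspondence is the bijection required by Definition~\ref{def:induced}, are the only genuinely delicate points; the remainder is just the topological translation of planarity of $G$ into the non-crossing of the chords.
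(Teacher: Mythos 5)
Your proposal is correct and follows essentially the same route as the paper: the DFS ordered by the drawing's rotation system, the corner-to-visit correspondence (the paper's ``types'' of points on the small circles $C_v$), the dummy root $r'$ introduced precisely to disambiguate where the boundary is cut, and the thicken-the-tree topological argument are all exactly the paper's construction. The only cosmetic difference is that you unroll the fattened tree's boundary into a horizontal line and invoke Lemma~\ref{lem:path-outerplanar-drawing} directly, whereas the paper keeps it as the cycle $(0,1,\dots,2n)$ and passes through Claim~\ref{claim:po-embedding} on the outerplanarity of $G^{+}_{T,f}$; these are equivalent formulations.
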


\begin{proof}
Let $G=(V,E)$ be a planar graph, and consider a drawing of it in the plane.  Let $T$ be a spanning  tree  of $G$, rooted at an arbitrary node $r$. For every vertex $v\in V$,  we set a numbering 
\[
\nu_v:\{0,\dots,\deg_T(v)-1\}  \rightarrow N(v),
\] 
where $N(v)$ denotes the set of neighbors of~$v$ in $T$. In this numbering, $\nu_v(0)$ is the parent of $v$ in $T$ if $v\neq r$. These numberings satisfy that, for $i=0,\dots,\deg_T(v)-1$,  the edge $\{v, \nu_v(i)\}$ is immediately  followed by the edge $\{v, \nu_v(i+1 \bmod \deg_T(v))\}$ when one follows the counterclockwise ordering of the edges of $T$ incident to $v$ in the planar drawing of $G$.  For every two distinct children  $u, u'$ of $v$, the fact that      $u = \nu_v(k)$ and $u' = \nu_v(k')$ with $k <k'$ is denoted by $u \prec u'$.

We now define the DFS-mapping $f$ of $T$ as follows. The DFS traversal used for constructing $f$ starts from $r$, and explores the children of each node~$v$ in counterclockwise order, i.e., in the order provided by  the numbering~$\nu_v$. It follows that,  for every $v\in V$, there exists a sequence $1\leq i_1<\dots< i_d\leq 2n-1$ of integers, with $d=\deg_T(v)$ if $v\neq r$, and $d=\deg_T(r)+1$ otherwise, such that  $f^{-1}(v) = \{i_1,\dots, i_{d}\}$, and 
\[
f(i_1+1) = f(i_2-1) \prec f(i_2+1) = f(i_3-1) \prec \dots \prec f(i_{d-1}+1)=f(i_d-1).
\]
To construct the desired path-outerplanar graph  $G_{T,f}$, it is convenient to add an extra node~$r'$ to~$G$, of degree~1, connected to the root~$r$ of~$T$. In the drawing, $r'$~is placed so that $\{r,r'\}$ appears between the edges  $\{r,\nu_r(0)\}$ and $\{r,\nu_r(\deg_T(r)-1)\}$ when one follows the clockwise ordering of the edges of $T$ incident to $r$ in the planar drawing of $G$. Moreover, $r'$ is placed close enough to $r$ in the plane so that the edge $\{r,r'\}$ drawn as a straight line does not cross any edge in the drawing of $G$ (see Figure~\ref{fig:tree1}(a)). We extend $f$ to $\{0,2n\}$ by setting $f(0)=f(2n)=r'$. In this way, $r$ becomes an internal node of the tree $T'=T+r'$, avoiding special considerations in the construction of $G_{T,f}$  regarding whether a node~$v$ is the root~$r$ or not. In particular, $d$ systematically denotes the degree of the considered node $v\in V$, including~$r$, and thus  $f(i_1-1)=f(i_d+1)$ is the parent of~$v$ in~$T'$, for every $v\in V$. 

 For every $v\in V$, let us fix a circle $C_v$ centered at the point representing $v$ in the planar drawing of $G$. We choose $C_v$ sufficiently small so that (\emph{i})~$v$ is the only node inside $C_v$, (\emph{ii})~the only drawings of edges crossing $C_v$ have $v$ as endpoints, and (\emph{iii})~$C_v\cap C_{v'}=\emptyset$ whenever $v\neq v'$. See Figure~\ref{fig:thickening-a-tree}. For $1\leq k\leq d$, let $x_{i_k}$ be the point of the plane at the intersection  of $C_v$ with the drawing of the edge $\{ f(i_k),  f(i_k+1)) \}$. These points are met in the order $x_{i_1}, \dots, x_{i_d}$ whenever one travels along $C_v$ counterclockwise, starting from~$x_{i_1}$.  
 
 We say that a point $x$  of $C_v$ is of type $i_k$ if, traveling along  $C_v$ counterclockwise  starting from~$x$,   the first point met among $\{x_{i_1}, \dots, x_{i_d}\}$ is $x_{i_k}$. The type of $x$ is denoted by $\tau(x)$.   For every cotree edge $\{u, v\}\in E  \setminus E(T)$, the drawing of $\{u, v\}$ meets $C_u$ in a point  $x_{(u, v)}$, and meets $C_v $ in a point  $x_{(v, u)}$ (see Figure~\ref{fig:thickening-a-tree}).  We now consider the graph $G_{ T,f }$ induced by $(G, T, f)$, with the specific requirement that,  for every cotree edge $\{u, v \}$ of $E(G)  \setminus E(T)$, the unique edge $\{ i, j\} \in E(G_{ T,f })$ such that $\{f(i), f(j) \}  =  \{u, v \}$ is the edge 
 $
 \{ i, j\} = \{\tau(x_{(u, v)}), \tau(x_{(v, u)})   \}.
 $
 
 The proof of the lemma is completed by establishing that $G_{ T,f }$ is outerplanar. To do so, it is convenient to state the following simple result. For a graph $G=(V,E)$ with vertices numbered from $1$ to $n$, let $G^+$ be the graph obtained from $G$ by adding two vertices~$0$ and $n+1$, and $E(G^+)=E\;\cup \big \{\{0,1\}, \{0,n+1\}, \{n,n+1\}\big\}$. 

\begin{claim}\label{claim:po-embedding}
If $G$ is path-outerplanar with witness $(1,2,\dots, n)$, then $G^+$ is outerplanar, and has a drawing in which the outerface corresponds to the cycle $(0,1\dots,n,n+1)$. Conversely, if $G^+$ is outerplanar with a drawing in which the outerface forms a cycle $(0,1,\dots,n,n+1)$ in $G^+$, then $G$ is path-outerplanar with witness $(1,2,\dots,n)$. \hfill $\diamond$
\end{claim}

By Claim~\ref{claim:po-embedding}, it is sufficient to provide a planar embedding of $G_{T, f}^+$ in which the vertices  $\{0,\dots, 2n\}$, which form a cycle, are on the same face, in order. For this purpose, at each node~$v$ of $G$, the circle $C_v$ is split in $d$ sections, such that section~$k$, for $1\leq k\leq d$, contains all the points of $C_v$ with type~$i_k$ (see Figure~\ref{fig:thickening-a-tree}). We draw node $i_k$ on the middle of the section~$k$. In this way, all the nodes of $G_{T, f}$ are placed in the plane. 
Each edge $\{i,j\}$ of $G_{T, f}$ is drawn as a straight line connecting the corresponding points in the plane. In addition, the extra node~$r'$ is split into two nodes~$0$ and~$2n$, and three edges are added: $\{0,1\}, \{0,2n\}$, and $\{2n-1,n\}$. The two nodes ~$0$ and~$2n$ are placed next to each other, close enough so that these three edges do not intersect. 

The transformation simply corresponds to thickening the edges, and the internal nodes of the tree~$T'$, where the thickness of the tree is at most the maximum, taken over all the nodes~$v$, of the radius of~$C_v$, which can be chosen as small as desired. It follows that planarity is preserved. 

By the setting of~$f$, the nodes $(0,\dots,2n)$ form a path. By construction of $G_{T,f}$,  the cycle $(0,1,\dots,2n)$ in $G_{T,f}^+$ draws in a plane a Jordan curve, splitting the plane into two regions. The one containing the original drawing of $T$ is also a face of $G_{T,f}^+$, all the cotree edges being drawn on the other region.
It then follows from Claim~\ref{claim:po-embedding} that $G_{T,f}$ is path-outerplanar. 
\end{proof}

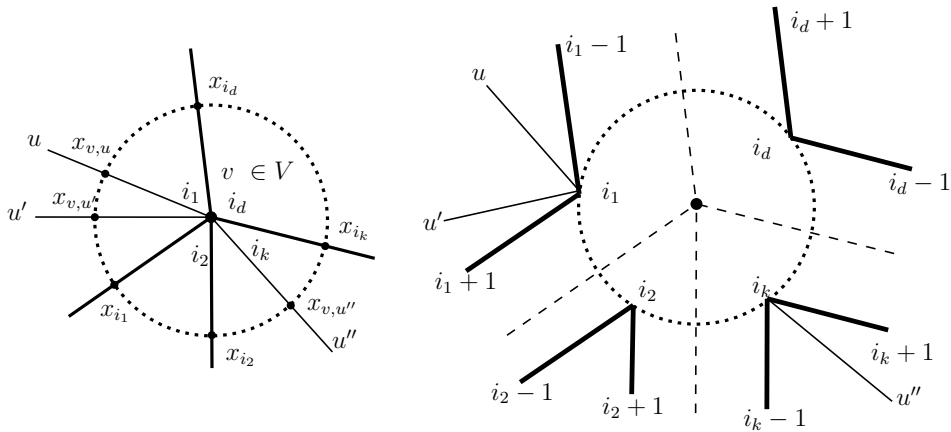
\begin{figure}[htb]
\begin{center}
\scalebox{0.8}{
\tikzset{every picture/.style={line width=0.75pt}} 

\begin{tikzpicture}[scale=0.75,x=0.75pt,y=0.75pt,yscale=-1,xscale=1]

\draw  [dash pattern={on 1.69pt off 2.76pt}][line width=1.5]  (123.51,313.5) .. controls (123.51,260.07) and (166.83,216.75) .. (220.26,216.75) .. controls (273.69,216.75) and (317.01,260.07) .. (317.01,313.5) .. controls (317.01,366.93) and (273.69,410.24) .. (220.26,410.24) .. controls (166.83,410.24) and (123.51,366.93) .. (123.51,313.5) -- cycle ;
\draw  [fill={rgb, 255:red, 0; green, 0; blue, 0 }  ,fill opacity=1 ] (217.55,310.79) .. controls (217.55,309.3) and (218.77,308.09) .. (220.26,308.09) .. controls (221.75,308.09) and (222.97,309.3) .. (222.97,310.79) .. controls (222.97,312.29) and (221.75,313.5) .. (220.26,313.5) .. controls (218.77,313.5) and (217.55,312.29) .. (217.55,310.79) -- cycle ;
\draw  [fill={rgb, 255:red, 0; green, 0; blue, 0 }  ,fill opacity=1 ] (206.32,217.56) .. controls (206.32,216.07) and (207.53,214.86) .. (209.03,214.86) .. controls (210.52,214.86) and (211.74,216.07) .. (211.74,217.56) .. controls (211.74,219.06) and (210.52,220.27) .. (209.03,220.27) .. controls (207.53,220.27) and (206.32,219.06) .. (206.32,217.56) -- cycle ;
\draw    (84.27,254.37) -- (220.26,310.79) ;

\draw [line width=1.5]    (101.86,394.01) -- (220.26,310.79) ;

\draw [line width=1.5]    (220.94,437.31) -- (220.26,310.79) ;

\draw [line width=0.75]    (321.07,423.78) -- (220.26,310.79) ;

\draw [line width=1.5]    (356.25,345.3) -- (220.26,310.79) ;

\draw [line width=1.5]    (203.08,169.39) -- (220.26,310.79) ;

\draw  [fill={rgb, 255:red, 0; green, 0; blue, 0 }  ,fill opacity=1 ] (216,310.79) .. controls (216,308.44) and (217.91,306.53) .. (220.26,306.53) .. controls (222.61,306.53) and (224.52,308.44) .. (224.52,310.79) .. controls (224.52,313.15) and (222.61,315.05) .. (220.26,315.05) .. controls (217.91,315.05) and (216,313.15) .. (216,310.79) -- cycle ;
\draw  [fill={rgb, 255:red, 0; green, 0; blue, 0 }  ,fill opacity=1 ] (129.47,274.39) .. controls (129.47,272.9) and (130.68,271.69) .. (132.17,271.69) .. controls (133.67,271.69) and (134.88,272.9) .. (134.88,274.39) .. controls (134.88,275.89) and (133.67,277.1) .. (132.17,277.1) .. controls (130.68,277.1) and (129.47,275.89) .. (129.47,274.39) -- cycle ;
\draw  [fill={rgb, 255:red, 0; green, 0; blue, 0 }  ,fill opacity=1 ] (120.81,310.79) .. controls (120.81,309.3) and (122.02,308.09) .. (123.51,308.09) .. controls (125.01,308.09) and (126.22,309.3) .. (126.22,310.79) .. controls (126.22,312.29) and (125.01,313.5) .. (123.51,313.5) .. controls (122.02,313.5) and (120.81,312.29) .. (120.81,310.79) -- cycle ;
\draw  [fill={rgb, 255:red, 0; green, 0; blue, 0 }  ,fill opacity=1 ] (137.04,367.49) .. controls (137.04,365.99) and (138.26,364.78) .. (139.75,364.78) .. controls (141.25,364.78) and (142.46,365.99) .. (142.46,367.49) .. controls (142.46,368.98) and (141.25,370.19) .. (139.75,370.19) .. controls (138.26,370.19) and (137.04,368.98) .. (137.04,367.49) -- cycle ;
\draw  [fill={rgb, 255:red, 0; green, 0; blue, 0 }  ,fill opacity=1 ] (218.1,409.7) .. controls (218.1,408.21) and (219.31,407) .. (220.8,407) .. controls (222.3,407) and (223.51,408.21) .. (223.51,409.7) .. controls (223.51,411.2) and (222.3,412.41) .. (220.8,412.41) .. controls (219.31,412.41) and (218.1,411.2) .. (218.1,409.7) -- cycle ;
\draw  [fill={rgb, 255:red, 0; green, 0; blue, 0 }  ,fill opacity=1 ] (283.72,384.81) .. controls (283.72,383.31) and (284.93,382.1) .. (286.43,382.1) .. controls (287.92,382.1) and (289.13,383.31) .. (289.13,384.81) .. controls (289.13,386.3) and (287.92,387.51) .. (286.43,387.51) .. controls (284.93,387.51) and (283.72,386.3) .. (283.72,384.81) -- cycle ;
\draw  [fill={rgb, 255:red, 0; green, 0; blue, 0 }  ,fill opacity=1 ] (312.41,335.01) .. controls (312.41,333.52) and (313.62,332.31) .. (315.11,332.31) .. controls (316.61,332.31) and (317.82,333.52) .. (317.82,335.01) .. controls (317.82,336.51) and (316.61,337.72) .. (315.11,337.72) .. controls (313.62,337.72) and (312.41,336.51) .. (312.41,335.01) -- cycle ;
\draw    (73.99,310.66) -- (220.26,310.79) ;

\draw  [dash pattern={on 1.69pt off 2.76pt}][line width=1.5]  (525.78,302.45) .. controls (525.78,248.31) and (569.67,204.43) .. (623.81,204.43) .. controls (677.94,204.43) and (721.83,248.31) .. (721.83,302.45) .. controls (721.83,356.59) and (677.94,400.47) .. (623.81,400.47) .. controls (569.67,400.47) and (525.78,356.59) .. (525.78,302.45) -- cycle ;
\draw  [fill={rgb, 255:red, 0; green, 0; blue, 0 }  ,fill opacity=1 ] (621.06,299.71) .. controls (621.06,298.19) and (622.29,296.97) .. (623.81,296.97) .. controls (625.32,296.97) and (626.55,298.19) .. (626.55,299.71) .. controls (626.55,301.22) and (625.32,302.45) .. (623.81,302.45) .. controls (622.29,302.45) and (621.06,301.22) .. (621.06,299.71) -- cycle ;
\draw    (449.28,200.31) -- (526.33,288.6) ;

\draw [line width=0.75]  [dash pattern={on 4.5pt off 4.5pt}]  (466.28,409.8) -- (623.81,299.71) ;

\draw [line width=0.75]  [dash pattern={on 4.5pt off 4.5pt}]  (623.12,475.06) -- (623.81,299.71) ;

\draw [line width=0.75]    (786.54,465.18) -- (682.62,379.09) ;

\draw [line width=0.75]  [dash pattern={on 4.5pt off 4.5pt}]  (788.19,341.25) -- (623.81,299.71) ;

\draw [line width=0.75]  [dash pattern={on 4.5pt off 4.5pt}]  (606.4,156.44) -- (623.81,299.71) ;

\draw  [fill={rgb, 255:red, 0; green, 0; blue, 0 }  ,fill opacity=1 ] (619.49,299.71) .. controls (619.49,297.32) and (621.42,295.39) .. (623.81,295.39) .. controls (626.19,295.39) and (628.13,297.32) .. (628.13,299.71) .. controls (628.13,302.09) and (626.19,304.03) .. (623.81,304.03) .. controls (621.42,304.03) and (619.49,302.09) .. (619.49,299.71) -- cycle ;
\draw    (413.64,313.83) -- (526.33,288.6) ;

\draw [line width=2.25]    (802.99,270.51) -- (702.36,244.18) ;

\draw [line width=2.25]    (688.93,134.51) -- (702.36,244.18) ;

\draw [line width=2.25]    (783.25,405.41) -- (682.62,379.09) ;

\draw [line width=2.25]    (682.35,471.76) -- (682.57,396.64) -- (682.62,379.09) ;

\draw [line width=2.25]    (569.93,459.15) -- (571.3,384.57) ;

\draw [line width=2.25]    (432.28,355.92) -- (526.33,291.34) ;

\draw [line width=2.25]    (477.25,449.14) -- (571.3,384.57) ;

\draw [line width=2.25]    (508.51,165.76) -- (526.33,291.34) ;

\draw (230.81,200.57) node  [font=\large]  {$x_{i_d}$};
\draw (120.99,252.18) node  [font=\large]  {$x_{v,u}$};
\draw (107.32,298.3) node  [font=\large]  {$x_{v,u'}$};
\draw (140.6,391.45) node  [font=\large]  {$x_{i_{1}}$};
\draw (245.14,428.53) node  [font=\large]  {$x_{i_{2}}$};
\draw (339.54,322.71) node  [font=\large]  {$x_{i_{k}}$};
\draw (321.2,387.08) node  [font=\large]  {$x_{v,u''}$};
\draw (203.73,290.46) node  [font=\large]  {$i_{1}$};
\draw (211.36,343.37) node  [font=\large]  {$i_{2}$};
\draw (261.71,338.18) node  [font=\large]  {$i_{k}$};
\draw (242.23,298.3) node  [font=\large]  {$i_{d}$};
\draw (73.18,246.52) node  [font=\large]  {$u$};
\draw (259.04,271.59) node  [font=\large]  {$v\ \in V$};
\draw (59.11,305.51) node  [font=\large]  {$u'$};
\draw (331.35,412.68) node  [font=\large]  {$u''$};
\draw (552.94,290.4) node  [font=\large]  {$i_{1}$};
\draw (583.43,372.23) node  [font=\large]  {$i_{2}$};
\draw (678.32,368.75) node  [font=\large]  {$i_{k}$};
\draw (678.33,253.04) node  [font=\large]  {$i_{d}$};
\draw (442.98,191.26) node  [font=\large]  {$u$};
\draw (404.59,314.1) node  [font=\large]  {$u'$};
\draw (800.53,460.52) node  [font=\large]  {$u''$};
\draw (432.01,365.65) node  [font=\large]  {$i_{1} +1$};
\draw (479.17,458.33) node  [font=\large]  {$i_{2} -1$};
\draw (571.3,470.39) node  [font=\large]  {$i_{2} +1$};
\draw (689.2,480.81) node  [font=\large]  {$i_{k} -1$};
\draw (795.43,424.91) node  [font=\large]  {$i_{k} +1$};
\draw (809.85,282.55) node  [font=\large]  {$i_{d} -1$};
\draw (728.33,146.75) node  [font=\large]  {$i_{d} +1$};
\draw (540.88,166.88) node  [font=\large]  {$i_{1} -1$};

\end{tikzpicture}
}
\end{center}
\vspace{-4ex}
\caption{Planar drawing: from $(G,T,f)$ to $G^+_{T,f}$.}\label{fig:thickening-a-tree}
\end{figure}

The next result is a form of reciprocal of Lemma~\ref{le:GTf}. 

\begin{lemma}
\label{le:noGTf}
If $G$ has a spanning tree $T$, and a DFS-mapping $f$ of $T$, such that a graph $H$  induced by  $(G, T, f) $  is path-outerplanar, then $G$ is planar. 
\end{lemma}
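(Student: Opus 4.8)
The plan is to reverse the thickening construction of Lemma~\ref{le:GTf}: starting from a path-outerplanar drawing of $H$, I will recover a planar drawing of $G$ by contracting, for each vertex $v\in V(G)$, all the copies $f^{-1}(v)\subseteq\{1,\dots,2n-1\}$ back to a single point. Since $H$ is path-outerplanar, Lemma~\ref{lem:path-outerplanar-drawing} gives a drawing in which the Hamiltonian path $(1,2,\dots,2n-1)$ lies on a horizontal line with all cotree edges drawn as non-crossing arcs strictly above it. In particular, all vertices of $H$ lie on a common face $F$ (the region below the line), appearing in the order $1,2,\dots,2n-1$ along its boundary, while every edge of $H$ other than the path edges is drawn on the opposite side of that boundary.

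First I would record the combinatorial structure of $f$. For each $v$, write $f^{-1}(v)=\{i_1<\dots<i_d\}$; the interval $[i_1,i_d]$ is exactly the DFS-contour of the subtree of $T$ rooted at $v$, and each gap $(i_k,i_{k+1})$ is the contour of one child-subtree. From this laminar (balanced-parenthesis) nesting I would deduce the key \emph{non-crossing} property: the partition $\pi=\{f^{-1}(v):v\in V(G)\}$ of $\{1,\dots,2n-1\}$ has no two crossing blocks, i.e., there are no indices $a<b<a'<b'$ with $f(a)=f(a')=u$, $f(b)=f(b')=v$, and $u\neq v$. Indeed, for any two distinct vertices $u,v$ of $G$, either their subtrees are disjoint (so their copies occupy disjoint intervals), or one is a descendant of the other (so the copies of the descendant all fall inside a single gap of the ancestor); in every case one block lies within a single gap of the other, which is precisely the non-crossing condition.

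Next I would realize the contraction geometrically inside $F$. Because $\pi$ is non-crossing and all its points lie in order on the boundary arc of the single face $F$, I can place, for each $v$, a point $p_v$ in the interior of $F$ and draw pairwise non-crossing ``spokes'' joining $p_v$ to each element of $f^{-1}(v)$, with all spokes of all blocks pairwise disjoint except at shared endpoints on the boundary. This is the standard fact that a non-crossing partition of boundary points of a disk can be simultaneously contracted to interior points without crossings; it can be proved by induction on the number of blocks, contracting an innermost (interval-minimal) block first. Sliding each copy $i_k$ of $v$ along its spoke to $p_v$ then identifies all copies of $v$ to the single point $p_v$, and since the spokes live in $F$ while the cotree arcs live strictly on the other side, this identification introduces no crossing.

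Finally I would check that the contracted drawing is a planar drawing of $G$. Under the identification, each cotree edge $\{i,j\}$ of $H$ (with $\{f(i),f(j)\}=\{u,v\}$ by Definition~\ref{def:induced}) becomes an edge between $p_u$ and $p_v$, i.e., the cotree edge $\{u,v\}$ of $G$; and the two path edges arising from the two DFS-traversals of a tree edge $\{v,c\}$ both become edges between $p_v$ and $p_c$, so after deleting one of each resulting pair of parallel edges we recover exactly the tree edges of $T$. Hence the identification yields a crossing-free drawing of $G=T\cup(E\setminus E(T))$, proving that $G$ is planar. The conceptual crux of the argument is the non-crossing realization of the contraction (the combination of the DFS laminar property with the simultaneous spoke construction in $F$); once that is in place, the identification of copies and the matching of edges to $G$ is bookkeeping.
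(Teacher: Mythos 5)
Your proof is correct and follows essentially the same route as the paper's: both exploit the laminarity of the DFS intervals $f^{-1}(v)$ to realize the identification of copies in the free region below the Hamiltonian path, and then contract. The paper packages this combinatorially --- it adds the auxiliary path-outerplanar graph $G_f$ (joining consecutive copies of each vertex below the line) and contracts its edges, invoking closure of planarity under edge contraction --- whereas you perform the same contraction geometrically via non-crossing spokes, but the key idea is identical.
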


\begin{proof} 
Let us assume that $G$ has a spanning tree $T$, and a DFS-mapping $f$ of $T$ such that some graph $H$  induced by  $(G, T, f) $  is path-outerplanar. Let $G_f $ be the graph with vertex set $\{1, 2, \dots, 2n-1\}$ such that, for every $1\leq i<j\leq 2n-1$, $\{i,j\}$ is an edge if and only if (1)~$j=i+1$, or (2)~$f(i)  = f(j)$,  and every $k\in\{i+1,\dots,j-1\}$ satisfies $f(k) \neq  f(i)$. The graph $G_f$ is path-outerplanar by construction, due to the definition of the mapping $f$. Let $G'$ be the graph defined as $V(G') = \{1, \dots, 2n -1 \}$, and  $E(G')=E(H) \cup E(G_f)$.  $G'$ is planar, since both $H$ and $G_f$ are  path-outerplanar graphs, and thus, for each of the two graphs, its edges can be drawn in one of the two sides of the path $(1, \dots, 2n -1)$ --- if these two graphs share edges, then they are considered as belonging to $H$. It follows that if all edges of $G'$ belonging  to $E(G_f) \setminus E(H)$ are contracted, then the resulting graph is planar. Now, observe that the resulting graph is precisely the graph $G$, because the contractions identify all pairs of nodes $i,j$ for which $f(i) = f(j)$ holds. 
\end{proof}

\subsection{Certifying planar graphs}
\label{se:PLS-planar}

The previous sections provide us with the ingredients for the proof of Theorem~\ref{theo:main}. Lemma~\ref{le:GTf} provides a tool for transforming a planar graph into a path-outerplanar graph, by ``cutting along'' a spanning tree, and transforming the tree into a path through a specific DFS traversal. Importantly, this transformation is both ways, in the sense that, as stated in Lemma~\ref{le:noGTf}, if a graph $G$ has been transformed into a path-outerplanar graph by cutting along a spanning tree, then $G$ is necessarily planar. Our proof-labeling scheme for planar graphs aims at implementing this transformation, and then checking the path-outerplanarity of the resulting graph using the scheme provided in Lemma~\ref{lem:PLS-path-outerplanar}. The rest of the section contains the details of the implementation, including the delicate issue of keeping the certificate size small, which is not direct, as a same node of the graph may have to simulate the behavior of many vertices of the path-outerplanar graph after applying the transformation. 

\paragraph{Proof of Theorem~\ref{theo:main}.}

Given a planar graph $G$, the prover  draws it on the plane, constructs a spanning tree $T$, a DFS-mapping $f$ on its vertices, and a path-outerplanar graph $G_{T,f}$ as in Lemma~\ref{le:GTf}. The prover then provides the nodes with certificates allowing them to verify that:
\begin{enumerate}
\item\label{i:DFSo} $T$ is a spanning tree of $G$, and $f$ is a DFS-mapping of $T$; 
\item $G_{T,f}$ is path-outerplanar, with witness $f$.
\end{enumerate}
Certifying the conditions in the first item is quite simple and standard~\cite{KormanKP10}. It is therefore omitted, apart from the issue of the certificates size, which we discuss later. We focus on the second item. The proof-labeling scheme of the path-outerplanarity of $G_{T,f}$ with witness~$f$ was presented in Lemma~\ref{lem:PLS-path-outerplanar}.  Let us recall that a vertex $x$ of $G$ is transformed  into several nodes in $G_{T,f}$, namely into the nodes in $f^{-1}(x) = \{i_1,\dots,i_{d}\}$. The verification at node $x$ involves the simulation of the verification protocol for path-outerplanarity  for each of the nodes $i_1,\dots,i_{d}$ of the graph $G_{T,f}$. The index~$d$ may however be large, even as large as $O(n)$. It follows that, for keeping certificates of logarithmic size, the prover cannot simply assign  the $d$ path-outerplanarity certificates of its $d$ copies in $G_{T,f}$ to node $x$. We now explain how to cope with this issue for preserving $O(\log n)$-bit certificates.

Every edge $\{x,y\}$ of $T$ is mapped on two edges $\{i,j\}$ and $\{i',j'\}$ of $G_{T,f}$, and every non-tree edge $\{x,y\}$ of $G$ is mapped on one edge $\{i,j\}$ of $G_{T,f}$. The certificates corresponding to the $2n-1$ vertices of $G_{T,f}$ are distributed to the $n$ vertices of $G$ in such a way that, after a single communication round in $G$, each vertex~$x$ collects  the certificates of all its copies $i_1,\dots,i_{d}$ in the protocol for path-planarity, as well as the certificates of all the neighbors in $G_{T,f}$ of each copy~$i_j$, $j=1,\dots,d$. To see how this is achieved, let us consider for now a virtual scenario in which the certificates can be distributed on the \emph{edges} of $G$. Let $e=\{x,y\}$ be an edge of $G$, and let $\{i,j\}$ and $\{i',j'\}$ be its two associated edges in $G_{T,f}$ (if $e$ is mapped to a unique edge of $G_{T,f}$, which is the case of cotree edges, we simply set $\{i',j'\} = \{i,j\}$). Then the certificate~$c(e)$ of the edge~$e$ includes the following information:
\begin{itemize}
\item the identifiers of $x$ and $y$ in $G$;
\item the values $i,j,i',j'\in\{1,\dots,2n-1\}$;
\item the certificates of nodes $i,j,i',j'$ in the proof-labeling scheme for path-outerplanarity in $G_{T,f}$ with witness $f$.
\end{itemize}
If we could assign $c(e)$ to every edge $e=\{x,y\}$ of $G$, then $c(e)$ could be sent to its endpoints~$x$ and~$y$, and vertex $x$ could simulate the verification protocol of Algorithm~\ref{al:path-outerplanar} for checking the path-outerplanar graph $G_{T,f}$ with witness $f$, for each copy $i_1\dots,i_d$ of $x$. Since assigning certificates to edges is not doable, the certificate $c(e)$ is actually assigned to one of its two endpoints. This assignment is performed so that each node receives at most five edge-certificates. For this purpose we use the fact that a planar graph $G=(V,E)$ has \emph{degeneracy} at most~5. That is, there exists a total ordering~$\sigma$ of the vertices of $G$ such that every node $x$ has at most five neighbors $y$ with $\sigma(y)>\sigma(x)$. Every edge $e$ is associated to its smaller endpoint, according to $\sigma$. It follows that each node $x$ is associated to at most five edges, all with $x$ as endpoint. The prover assigns the certificate $c(x)$ to node $x$, defined as the concatenation of the at most five edge-certificates associated to~$x$. 
Extra-informations are also added to every certificate, for allowing the nodes to check that $T$ is a spanning tree of $G$. Overall,  the certificates are of the due size $O(\log n)$ bits.

The verification protocol executed at each node of $G$ is displayed in Algorithm~\ref{al:planar}.

\begin{figure}[h]
\begin{center}
\small
\begin{minipage}{.88\linewidth}
\begin{algorithm}[H]
\caption{\small Verification procedure for planarity at node $x\in V(G)$}
\label{al:planar}
collect the certificate of each neighbor; \\
\vspace{1ex}
//\emph{\textbf{Phase 1:} recover local information regarding $T,f$, and $G_{T,f}$;}//\\
compute ID of the parent of $x$ in $T$, and the IDs of all its children in $T$;\\
compute $f^{-1}(x)=\{i_1,i_2,\dots,i_d\}$, i.e., all nodes corresponding to $x$ in $G_{T,f}$;\\
\For{{\rm every} $i \in f^{-1}(x)$}
{
	retrieve certificate of node $i$ in $G_{T,f}$  dedicated to the path-outerplanarity of $G_{T,f}$; \\
	compute the neighbors of $i$ in $G_{T,f}$, and their certificates for path-outerplanarity;\\
}
\vspace{1ex}
//\emph{\textbf{Phase 2:} check that $T$ is a spanning tree of $G$, and $f$ is a DFS-mapping of $T$;}//\\
check local consistency of certificates for spanning tree;\label{t:spanningT}\\
check local consistency of certificates for DFS-mapping;\label{t:DFSo}\\

\vspace{1ex}

// \emph{\textbf{Phase 3:} check the path-outerplanarity of $G_{T,f}$ with witness $f$;}//\\
\For{{\rm every} $i \in f^{-1}(x)$}
{
	simulate execution of Algorithm~\ref{al:path-outerplanar} at node $i$ of $G_{T,f}$;\label{t:po}\\
}
\vspace{1ex}

\textbf{if} all checks are passed \textbf{then} return accept \textbf{else} return reject.

\end{algorithm}
\end{minipage}
\end{center}
\end{figure}

In Phase~1 of Algorithm~\ref{al:planar}, every node $x$ retrieves information regarding $x$ itself as a node of $G$, and all the ``virtual'' nodes $i\in f^{-1}(x)$ of $G_{T,f}$. Observe that, for every edge $e=\{x,y\}$ incident to $x$ in $G$, node $x$ obtains all information contained in $c(e)$ after one single communication round with its neighbors. An edge $e=\{x,y\}$ is in $T$ if and only if the pair $(x,y)$ has been mapped on two disjoint pairs $(i,j)$, $(i',j')$ corresponding to distinct edges $G_{T,f}$. Node $y$ is the parent of $x$ if $\min\{i,i'\}<\min\{j,j'\}$, and is a child otherwise. By considering all  incident edges in $G$, $x$ retrieves all its corresponding nodes $i_1,\dots, i_d$ in $G_{T,f}$, their neighbors in $G_{T,f}$, and all the corresponding  certificates for the path-outerplanarity of $G_{T,f}$ with witness $f$.

Phase 2 checks that $T$ is a spanning tree of  $G$ (see~\cite{KormanKP10}), and checks that $f$ is a DFS-mapping of $T$. This latter condition is rather simple to check, so we only provide some hints. Let $f^{-1}_{\min}(x)$ and $f^{-1}_{\max}(x)$ be the smallest and largest values in $f^{-1}(x)$. These values correspond to the first, and last time the DFS visited node $x$. The children $y_1, \dots, y_k$ of $x$ in $T$ are then sorted by increasing value of $f^{-1}_{\min}(y_i)$. The main test consists to check that: 
\begin{itemize}
\item if $x$ has no children, then $f^{-1}_{\max}(x) = f^{-1}_{\min}(x)$; 
\item if $x$ has children $y_1, \dots, y_k$, then $f^{-1}_{\min}(x) = f^{-1}_{\min}(y_1)-1$, $f^{-1}_{\max}(x) = f^{-1}_{\max}(y_k)+1$, and, for every $j=1,\dots, k$, $f^{-1}_{\min}(y_{j+1})=f^{-1}_{\max}(y_j)+2$. 
\end{itemize}
In addition, the root $r$ of $T$ checks that $\{1,2n-1\}\subseteq f^{-1}(r)$.

Finally, Phase~3 of Algorithm~\ref{al:planar} at $x$ applies the verification protocol for checking the consistency of the distributed proof of the path-outerplanarity of $G_{T,f}$, for every node $i\in f^{-1}(x)$  of $G_{T,f}$ corresponding to~$x$. 

Completeness holds as, if the input graph $G$ is planar, and if the prover constructs $T$ and $f$ correctly, as well as   the path-outerplanar graph $G_{T,f}$ whose existence is guaranteed by Lemma~\ref{le:GTf}, then Phase 2 accepts by construction of $T$ and~$f$, and Phase 3 accepts since the  verification algorithm for path-outerplanarity is correct by Lemma~\ref{lem:PLS-path-outerplanar}.

For the soundness, assume that Algorithm~\ref{al:planar} returns accepts at all nodes of $G$. Then, thanks to Phase~2,   $T$ is necessarily a spanning tree of~$G$, and $f$ is a DFS-mapping of $T$. Also, thanks to Phase~3, $G_{T,f}$ is necessarily path-outerplanar with witness $f$. Therefore, by Lemma~\ref{le:noGTf},  $G$ is planar. This completes the proof of Theorem~\ref{theo:main}. \qed

\section{Lower bounds}

This section is dedicated to the proof of Theorem~\ref{theo:lwb}. The proof is modular: Section~\ref{subsec:Kk} treats the case of $\Forb(K_k)$, while  Section~\ref{subsec:Kpq} treats the case of $\Forb(K_{p,q})$. Finally, Section~\ref{subsec:lower-bound-proof} combines the two cases, and completes the proof. All proofs in this section can be generalized to the setting where the verification algorithm performs $t$ rounds instead of just one round, for any constant $t\geq 1$, by replacing some edges of the graph instances constructed hereafter by a path of length~$O(t)$.

\subsection{Lower bound for $\Forb(K_{k})$}
\label{subsec:Kk}

Our lower bound proof for $\Forb(K_{k})$ is based on extending the techniques developed in~\cite{FeuilloleyH18}. 

\begin{lemma}
\label{lem:Kk}
For every $k\geq 3$, there are no locally checkable proofs for $\Forb(K_k)$ with $o(\log n)$-bit certificates.
\end{lemma}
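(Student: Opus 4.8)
The plan is to prove the lower bound by a crossing (indistinguishability) argument, combined with a gadget that reduces certifying $\Forb(K_k)$ to certifying the absence of a single long connection---essentially the obstruction that already forces $\Omega(\log n)$ bits for certifying acyclicity. I fix a putative locally checkable proof for $\Forb(K_k)$ with certificates of size $s(n)$ and verification radius $t=O(1)$, and assume toward a contradiction that $s(n)=o(\log n)$.

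First I would set up a \emph{$K_k$-activator} gadget $A$: a clique on $k-2$ vertices $a_1,\dots,a_{k-2}$ together with two distinguished ports $s,t$, each adjacent to all the $a_i$ but with $s\not\sim t$ and no internal $s$--$t$ path. The point of $A$ is the equivalence I would prove as a preliminary claim: for a graph $G$ obtained by attaching a forest $F$ to the ports (with $s,t\in V(F)$ and $G-A=F$), $G$ contains $K_k$ as a minor if and only if $s$ and $t$ lie in the same component of $G-A$. The ``if'' direction is a branch-set construction (the $k-2$ singletons $\{a_i\}$, together with $\{s\}$ and a branch set absorbing a chosen $s$--$t$ path off $A$, realize $K_k$); for $k=3$ the gadget degenerates to the single vertex $a_1$ and the statement reads ``$G$ has a cycle through $a_1$'', i.e.\ $\Forb(K_3)$ is essentially the forest case. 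With this gadget, yes-instances are forest-plus-activator graphs in which $s$ and $t$ are separated in $G-A$, and any ``$s$--$t$ reconnection'' becomes a genuine $K_k$ minor.

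Next comes the crossing argument, extending the technique of~\cite{FeuilloleyH18}. I would build a yes-instance $Y$ on $n=\Theta(N)$ vertices consisting of the activator plus a single $s$--$t$ path broken in the middle, so that $s,t$ are separated off $A$, hence $Y\in\Forb(K_k)$, and all interior rail vertices have degree $2$. Fix any accepting certificate assignment for $Y$; the verifier's decision at a node depends only on its radius-$t$ view (certificates and adjacency pattern). Pairing the radius-$t$ windows at distance $d$ on either side of the break, there are $\Theta(N)$ such paired windows, whereas the number of possible certificate contents of a window is $2^{O(t\,s(n))}=2^{o(\log n)}=n^{o(1)}$. Hence, for $N$ large, two paired windows at distances $d<d'$ have identical certificate contents on both sides. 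I would then splice at these matched windows---cutting the path and cross-connecting the four resulting endpoints---so as to heal the break and create an $s$--$t$ path running off $A$, while every vertex keeps a radius-$t$ view identical to one it had in $Y$. By the gadget equivalence the spliced graph $\hat Y$ contains $K_k$ as a minor, so $\hat Y\notin\Forb(K_k)$; yet the verifier accepts at every vertex, contradicting soundness. Since this contradiction arises as soon as $\Theta(N)>n^{o(1)}$, the assumption $s(n)=o(\log n)$ is untenable, and $s(n)=\Omega(\log n)$.

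The hardest part will be two coupled technical points. The first is the gadget-correctness claim---ruling out spurious $K_k$ minors in the yes-instances, i.e.\ that attaching a forest to $K_k-e$ never creates a $K_k$ minor unless the ports get reconnected---which I would handle by a small-separator/block analysis showing that any $K_k$ minor must concentrate in the unique dense block $A\cup\{s,t\}=K_k-e$. The second, and the reason the argument must be carried out in the locally checkable proof model rather than only for proof-labeling schemes, is that an LCP verifier reads its neighbors' full states, including their identifiers; the pigeonhole above matches only certificates, not identifiers, so the splice must come with an identifier relabeling of the rerouted region that makes the spliced views genuinely isomorphic to accepting ones. Because the windows are plain degree-$2$ path segments with pairwise-distinct identifiers, any such relabeling preserves the isomorphism type of each view, which is precisely what licenses matching over certificates alone. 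Making the splice simultaneously view-preserving, identifier-consistent, connected, and minor-creating is the crux, and is where the technique of~\cite{FeuilloleyH18} is extended.
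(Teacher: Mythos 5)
Your high-level plan (reduce $\Forb(K_k)$ to an $s$--$t$ reconnection question via a $K_k-e$ gadget, then run a cut-and-splice indistinguishability argument) is reasonable, and the gadget-correctness claim is fine: a block decomposition shows that when $s$ and $t$ are separated off $A$, every $2$-connected block of the instance is a subgraph of $K_k-e$ or a bridge, so no $K_k$ minor exists. But the indistinguishability core has two genuine gaps. First, the pigeonhole you run produces two matched windows \emph{on the same side} of the break (positions $d$ and $d'$ on the $s$-rail match each other, and likewise on the $t$-rail). Such matches only license splices that stay within one rail --- shortcuts, or excising a segment into a separate cycle --- and none of these connects the $s$-component to the $t$-component. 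The ``cross-connecting'' splice you describe gives some node on the $s$-rail a new neighbour taken from the $t$-rail, and nothing forces the $t$-rail's certificates to agree with the $s$-rail's at the corresponding offset: the prover may, for instance, use disjoint certificate alphabets on the two rails, making every cross edge locally detectable. Healing the break requires a \emph{cross-rail} window match, and pigeonholing over $\Theta(N)$ same-side pairs does not provide one.

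Second, and independently, your treatment of identifiers is unsound for locally checkable proofs. You match certificates only and then ``relabel the rerouted region,'' arguing that relabelling preserves the isomorphism type of each view. But an LCP verifier is an arbitrary function of the raw view, identifiers included; it need not be invariant under ID relabelling, and completeness only guarantees an accepting certificate assignment for each \emph{specific} ID assignment --- it promises nothing about the relabelled instance. This is precisely the difficulty the paper's proof is engineered around: it fixes the identifiers inside each ``block'' (a copy of $K_{k-1}$) once and for all, and pigeonholes over the $p!$ \emph{arrangements} of these ID-carrying blocks into a path, comparing $\log(p!)=\Omega(n\log n)$ against the $o(n\log n)$ bits of total certificate information. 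Two arrangements receiving identical per-block certificates are then recombined into a cycle of blocks (which does contain $K_k$ as a minor) in which every node's view is \emph{literally identical} --- identifiers, certificates and adjacencies --- to its view in one of the two accepted paths. A single-instance window pigeonhole plus relabelling cannot substitute for this family-of-instances argument, so as written your proof does not establish the lemma.
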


\begin{proof}
We first define two similar types of instances that we call \emph{paths of blocks} and \emph{cycles of blocks}. We prove that paths of blocks are legal instances, that is, they are $K_k$-minor-free, and that \emph{cycles of blocks} are illegal instances, that is, they contain $K_k$ as a minor. Then, we show by a counting argument that, if all the paths of blocks are accepted, and if the certificates used for these instances are too small, then there exists a cycle of blocks that is also accepted, which implies that the scheme is not correct.  

\medskip\noindent{\it Blocks and block connections.}
Let $p\geq 1$, and let $n=(k-1)(p+2)$.
For every $r=0,\dots,p+1$, let us consider a path $P_r$ of $k-1$ nodes with identifiers $r(k-1),r(k-1)+1,..., (r+1)(k-1)-1$, in consecutive order. 
A \emph{block} $B_r$ is a graph formed of such a path $P_r$, plus edges between every pair of nodes in~$P_r$. 
In other words, a block is a complete graph $K_{k-1}$, with an ordering of the nodes given by their identifiers. 
There are two special blocks: $B_0$, called the \emph{starting block}, and $B_{p+1}$, called the \emph{ending block}. The $p$ other blocks are called \emph{ordinary blocks}.
Note that the identifiers of the blocks are all distinct.

Blocks are connected by \emph{block connections}. A block connection from  block $B_i$ to block $B_j$ is as follows. Place the block $B_i$  on an horizontal line with the nodes in increasing order of their IDs, and then place the block $B_j$ to the right of $B_i$, also with the nodes in increasing order of their IDs. 
Then, add all edges between the $\lceil (k-1)/2 \rceil$ rightmost nodes of $B_i$  (which is called the \emph{right part} of $B_i$) and the $\lfloor (k-1)/2 \rfloor$ leftmost nodes of $B_j$ (which is called the \emph{left  part} of $B_j$). See Fig.~\ref{fig:block-connexion} for an example with $k=4$, where the edges inside the blocks are grey, and the edges between the blocks are black (since $\lceil (k-1)/2 \rceil=2$ and $\lfloor (k-1)/2 \rfloor=1$, there are two edges between these blocks).

\begin{figure}[htb]
\begin{center}
\scalebox{0.8}{
\tikzset{every picture/.style={line width=0.75pt}} 

\begin{tikzpicture}[x=0.75pt,y=0.75pt,yscale=-1,xscale=1]

\draw [color={rgb, 255:red, 155; green, 155; blue, 155 }  ,draw opacity=1 ][line width=1.5]    (100.5,169.5) .. controls (132.95,120.65) and (188.95,120.65) .. (219.5,169.5) ;

\draw [color={rgb, 255:red, 155; green, 155; blue, 155 }  ,draw opacity=1 ][line width=1.5]    (100.5,169.5) -- (159.5,169.5) ;

\draw [color={rgb, 255:red, 155; green, 155; blue, 155 }  ,draw opacity=1 ][fill={rgb, 255:red, 155; green, 155; blue, 155 }  ,fill opacity=1 ][line width=1.5]    (159.5,169.5) -- (219.5,169.5) ;

\draw  [fill={rgb, 255:red, 0; green, 0; blue, 0 }  ,fill opacity=1 ] (94.01,169.18) .. controls (94.18,165.6) and (97.23,162.83) .. (100.82,163.01) .. controls (104.4,163.18) and (107.17,166.23) .. (106.99,169.82) .. controls (106.82,173.4) and (103.77,176.17) .. (100.18,175.99) .. controls (96.6,175.82) and (93.83,172.77) .. (94.01,169.18) -- cycle ;
\draw  [fill={rgb, 255:red, 0; green, 0; blue, 0 }  ,fill opacity=1 ] (153.01,169.18) .. controls (153.18,165.6) and (156.23,162.83) .. (159.82,163.01) .. controls (163.4,163.18) and (166.17,166.23) .. (165.99,169.82) .. controls (165.82,173.4) and (162.77,176.17) .. (159.18,175.99) .. controls (155.6,175.82) and (152.83,172.77) .. (153.01,169.18) -- cycle ;
\draw  [fill={rgb, 255:red, 0; green, 0; blue, 0 }  ,fill opacity=1 ] (213.01,169.18) .. controls (213.18,165.6) and (216.23,162.83) .. (219.82,163.01) .. controls (223.4,163.18) and (226.17,166.23) .. (225.99,169.82) .. controls (225.82,173.4) and (222.77,176.17) .. (219.18,175.99) .. controls (215.6,175.82) and (212.83,172.77) .. (213.01,169.18) -- cycle ;
\draw [color={rgb, 255:red, 155; green, 155; blue, 155 }  ,draw opacity=1 ][line width=1.5]    (290.5,169.5) .. controls (322.95,120.65) and (378.95,120.65) .. (409.5,169.5) ;

\draw [color={rgb, 255:red, 155; green, 155; blue, 155 }  ,draw opacity=1 ][line width=1.5]    (290.5,169.5) -- (349.5,169.5) ;

\draw [color={rgb, 255:red, 155; green, 155; blue, 155 }  ,draw opacity=1 ][fill={rgb, 255:red, 155; green, 155; blue, 155 }  ,fill opacity=1 ][line width=1.5]    (349.5,169.5) -- (409.5,169.5) ;

\draw  [fill={rgb, 255:red, 0; green, 0; blue, 0 }  ,fill opacity=1 ] (284.01,169.18) .. controls (284.18,165.6) and (287.23,162.83) .. (290.82,163.01) .. controls (294.4,163.18) and (297.17,166.23) .. (296.99,169.82) .. controls (296.82,173.4) and (293.77,176.17) .. (290.18,175.99) .. controls (286.6,175.82) and (283.83,172.77) .. (284.01,169.18) -- cycle ;
\draw  [fill={rgb, 255:red, 0; green, 0; blue, 0 }  ,fill opacity=1 ] (343.01,169.18) .. controls (343.18,165.6) and (346.23,162.83) .. (349.82,163.01) .. controls (353.4,163.18) and (356.17,166.23) .. (355.99,169.82) .. controls (355.82,173.4) and (352.77,176.17) .. (349.18,175.99) .. controls (345.6,175.82) and (342.83,172.77) .. (343.01,169.18) -- cycle ;
\draw  [fill={rgb, 255:red, 0; green, 0; blue, 0 }  ,fill opacity=1 ] (403.01,169.18) .. controls (403.18,165.6) and (406.23,162.83) .. (409.82,163.01) .. controls (413.4,163.18) and (416.17,166.23) .. (415.99,169.82) .. controls (415.82,173.4) and (412.77,176.17) .. (409.18,175.99) .. controls (405.6,175.82) and (402.83,172.77) .. (403.01,169.18) -- cycle ;
\draw [line width=1.5]    (219.5,169.5) .. controls (237.95,154.65) and (273.95,155.65) .. (290.5,169.5) ;

\draw [line width=1.5]    (159.5,169.5) .. controls (199.5,139.5) and (248.95,133.65) .. (290.5,169.5) ;

\draw  [color={rgb, 255:red, 155; green, 155; blue, 155 }  ,draw opacity=1 ] (80.95,120.65) -- (239.95,120.65) -- (239.95,190.65) -- (80.95,190.65) -- cycle ;
\draw  [color={rgb, 255:red, 155; green, 155; blue, 155 }  ,draw opacity=1 ] (270.95,120.65) -- (429.95,120.65) -- (429.95,190.65) -- (270.95,190.65) -- cycle ;

\draw (159,210.55) node   [align=left] {$B_{i}$};
\draw (350,210.55) node   [align=left] {$B_{j}$};

\end{tikzpicture}
}
\vspace{-2ex}
\caption{Block connexion for $k=4$, from a block $B_i$ to a block $B_j$.}
\label{fig:block-connexion}
\end{center}
\end{figure}
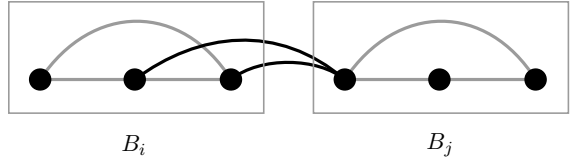

Note that two nodes that are linked by an edge have at most $k-3$ nodes between them in the horizontal ordering. In other words, no edge can ``jump'' over more than $k-3$ nodes.
Also note that, if $B_i$ has a block connection to $B_j$, and if $B_j$ has a block connection to $B_k$, then, every node of $B_j$ is connected either to a node of $B_i$, or to a node of $B_k$, but not both.  

\medskip\noindent{\it Path and cycles of blocks.}
Let $\pi$ be a permutation on $p$ elements. To create a path of blocks, $B_0$ has a block connection to $B_{\pi^{-1}(1)}$, $B_{\pi^{-1}(1)}$ has a block connection to $B_{\pi^{-1}(2)}$, etc., and $B_{\pi^{-1}(p)}$ has a block connection to $B_{p+1}$.
A cycle of blocks is constructed in a similar way, as follows. Let $k,  k'$ with $1 \leq k < k' \leq p$. To create a  cycle of blocks, $B_{\pi^{-1}(\ell)}$ has a connection to $B_{\pi^{-1}(\ell +1)}$ for every $\ell\in\{k ,\dots, k'-1\}$,   and  $B_{\pi^{-1}( k' )}$ has a connection to $B_{\pi^{-1}(k)}$.  Note that a cycle of blocks uses only a subset of blocks. 
Note also that, as the node identifiers given to different blocks are distinct, the paths and the cycles of blocks are  well defined.

\begin{claim}
Paths of blocks are $K_k$-minor-free.
\end{claim} 

\begin{proofofclaim} 
By symmetry, there is no loss of generality in proving the claim  only for the identity permutation~ $\pi$. For simplicity of the notation, the nodes are referred to by their identifiers.
Also, for the sake of clarity, we do not consider a path of blocks but an extension of it. That is, we add some edges, to make the path of blocks more symmetric, and hence avoid some case analysis.  
Note that, to prove the claim, it is sufficient to show that the larger graph $G$ obtained by adding edges to the path of blocks, is $K_k$-minor-free.  

For every $0 \leq r \leq p$, and every $0 \leq i \leq k-1$, the node $r(k-1) +i  $ of  $B_r$  is connected in $G$ to all nodes $(r+1)(k-1) +j$, $0 \leq j < i$,  of  $B_{r+1}$. See Figure~\ref{fig:kk-minor-free} for an example. 
Note that, for every node $\ell$, the closed neighborhood of $\ell$ (i.e., $\ell$ and its neighbors) is the interval 
 $[ \ell -k+2, \ell +k -2] $. It follows that, for every edge  $\{\ell, \ell'\}$, $\vert \ell - \ell'  \vert \leq k-2$. 

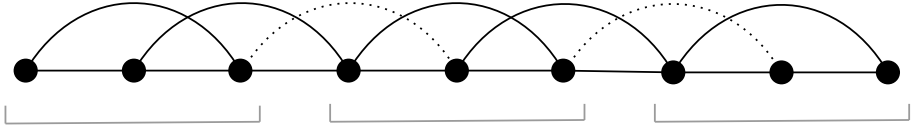
\begin{figure}[h]
\begin{center}
\scalebox{0.9}{
\tikzset{every picture/.style={line width=0.75pt}} 

\begin{tikzpicture}[x=0.75pt,y=0.75pt,yscale=-1,xscale=1]

\draw  [fill={rgb, 255:red, 0; green, 0; blue, 0 }  ,fill opacity=1 ][line width=0.75]  (54.6,201.2) .. controls (54.6,197.78) and (57.38,195) .. (60.8,195) .. controls (64.22,195) and (67,197.78) .. (67,201.2) .. controls (67,204.62) and (64.22,207.4) .. (60.8,207.4) .. controls (57.38,207.4) and (54.6,204.62) .. (54.6,201.2) -- cycle ;
\draw  [fill={rgb, 255:red, 0; green, 0; blue, 0 }  ,fill opacity=1 ][line width=0.75]  (114.6,201.2) .. controls (114.6,197.78) and (117.38,195) .. (120.8,195) .. controls (124.22,195) and (127,197.78) .. (127,201.2) .. controls (127,204.62) and (124.22,207.4) .. (120.8,207.4) .. controls (117.38,207.4) and (114.6,204.62) .. (114.6,201.2) -- cycle ;
\draw  [fill={rgb, 255:red, 0; green, 0; blue, 0 }  ,fill opacity=1 ][line width=0.75]  (173.6,201.2) .. controls (173.6,197.78) and (176.38,195) .. (179.8,195) .. controls (183.22,195) and (186,197.78) .. (186,201.2) .. controls (186,204.62) and (183.22,207.4) .. (179.8,207.4) .. controls (176.38,207.4) and (173.6,204.62) .. (173.6,201.2) -- cycle ;
\draw [line width=0.75]    (60.8,201.2) -- (120.8,201.2) ;

\draw [line width=0.75]    (120.8,201.2) -- (180.8,201.2) ;

\draw [line width=0.75]    (60.8,201.2) .. controls (90.43,151.4) and (150.43,150.4) .. (179.8,201.2) ;

\draw  [fill={rgb, 255:red, 0; green, 0; blue, 0 }  ,fill opacity=1 ][line width=0.75]  (233.6,201.2) .. controls (233.6,197.78) and (236.38,195) .. (239.8,195) .. controls (243.22,195) and (246,197.78) .. (246,201.2) .. controls (246,204.62) and (243.22,207.4) .. (239.8,207.4) .. controls (236.38,207.4) and (233.6,204.62) .. (233.6,201.2) -- cycle ;
\draw  [fill={rgb, 255:red, 0; green, 0; blue, 0 }  ,fill opacity=1 ][line width=0.75]  (293.6,201.2) .. controls (293.6,197.78) and (296.38,195) .. (299.8,195) .. controls (303.22,195) and (306,197.78) .. (306,201.2) .. controls (306,204.62) and (303.22,207.4) .. (299.8,207.4) .. controls (296.38,207.4) and (293.6,204.62) .. (293.6,201.2) -- cycle ;
\draw  [fill={rgb, 255:red, 0; green, 0; blue, 0 }  ,fill opacity=1 ][line width=0.75]  (352.6,201.2) .. controls (352.6,197.78) and (355.38,195) .. (358.8,195) .. controls (362.22,195) and (365,197.78) .. (365,201.2) .. controls (365,204.62) and (362.22,207.4) .. (358.8,207.4) .. controls (355.38,207.4) and (352.6,204.62) .. (352.6,201.2) -- cycle ;
\draw [line width=0.75]    (239.8,201.2) -- (299.8,201.2) ;

\draw [line width=0.75]    (299.8,201.2) -- (359.8,201.2) ;

\draw [line width=0.75]    (239.8,201.2) .. controls (269.43,151.4) and (329.43,150.4) .. (358.8,201.2) ;

\draw  [fill={rgb, 255:red, 0; green, 0; blue, 0 }  ,fill opacity=1 ][line width=0.75]  (413.6,202.2) .. controls (413.6,198.78) and (416.38,196) .. (419.8,196) .. controls (423.22,196) and (426,198.78) .. (426,202.2) .. controls (426,205.62) and (423.22,208.4) .. (419.8,208.4) .. controls (416.38,208.4) and (413.6,205.62) .. (413.6,202.2) -- cycle ;
\draw  [fill={rgb, 255:red, 0; green, 0; blue, 0 }  ,fill opacity=1 ][line width=0.75]  (473.6,202.2) .. controls (473.6,198.78) and (476.38,196) .. (479.8,196) .. controls (483.22,196) and (486,198.78) .. (486,202.2) .. controls (486,205.62) and (483.22,208.4) .. (479.8,208.4) .. controls (476.38,208.4) and (473.6,205.62) .. (473.6,202.2) -- cycle ;
\draw  [fill={rgb, 255:red, 0; green, 0; blue, 0 }  ,fill opacity=1 ][line width=0.75]  (532.6,202.2) .. controls (532.6,198.78) and (535.38,196) .. (538.8,196) .. controls (542.22,196) and (545,198.78) .. (545,202.2) .. controls (545,205.62) and (542.22,208.4) .. (538.8,208.4) .. controls (535.38,208.4) and (532.6,205.62) .. (532.6,202.2) -- cycle ;
\draw [line width=0.75]    (419.8,202.2) -- (479.8,202.2) ;

\draw [line width=0.75]    (479.8,202.2) -- (539.8,202.2) ;

\draw [line width=0.75]    (419.8,202.2) .. controls (449.43,152.4) and (509.43,151.4) .. (538.8,202.2) ;

\draw [line width=0.75]    (120.8,201.2) .. controls (150.43,151.4) and (210.43,150.4) .. (239.8,201.2) ;

\draw [line width=0.75]  [dash pattern={on 0.84pt off 2.51pt}]  (180.8,201.2) .. controls (210.43,151.4) and (270.43,150.4) .. (299.8,201.2) ;

\draw [line width=0.75]  [dash pattern={on 0.84pt off 2.51pt}]  (359.8,201.2) .. controls (389.43,151.4) and (450.43,151.4) .. (479.8,202.2) ;

\draw [line width=0.75]    (299.8,201.2) .. controls (329.43,151.4) and (390.43,151.4) .. (419.8,202.2) ;

\draw [line width=0.75]    (179.8,201.2) -- (239.8,201.2) ;

\draw [line width=0.75]    (358.8,201.2) -- (419.8,202.2) ;

\draw [color={rgb, 255:red, 155; green, 155; blue, 155 }  ,draw opacity=1 ]   (49.58,230.8) -- (190.58,229.8) ;

\draw [color={rgb, 255:red, 155; green, 155; blue, 155 }  ,draw opacity=1 ]   (49.58,220.8) -- (49.58,230.8) ;

\draw [color={rgb, 255:red, 155; green, 155; blue, 155 }  ,draw opacity=1 ]   (190.58,220.8) -- (190.58,229.8) ;

\draw [color={rgb, 255:red, 155; green, 155; blue, 155 }  ,draw opacity=1 ]   (229.58,229.8) -- (370.58,228.8) ;

\draw [color={rgb, 255:red, 155; green, 155; blue, 155 }  ,draw opacity=1 ]   (229.58,219.8) -- (229.58,229.8) ;

\draw [color={rgb, 255:red, 155; green, 155; blue, 155 }  ,draw opacity=1 ]   (370.58,219.8) -- (370.58,228.8) ;

\draw [color={rgb, 255:red, 155; green, 155; blue, 155 }  ,draw opacity=1 ]   (409.58,229.8) -- (550.58,228.8) ;

\draw [color={rgb, 255:red, 155; green, 155; blue, 155 }  ,draw opacity=1 ]   (409.58,219.8) -- (409.58,229.8) ;

\draw [color={rgb, 255:red, 155; green, 155; blue, 155 }  ,draw opacity=1 ]   (550.58,219.8) -- (550.58,228.8) ;

\end{tikzpicture}
}
\end{center}
\caption{\label{fig:kk-minor-free}
The graph $G$ that extends the path of blocks with three blocks, for $k=4$. The edges of the original path of blocks are plain edges and the new edges are dotted edges.}
\end{figure}

Let us assume, for the purpose of contradiction, that $G$ contains $K_k$ as minor. Then, there are $k$ disjoint sets of nodes $S_1,\dots,S_k$ such that each set $S_i$  is connected, and, for every pair $(S_i,S_j)$ of sets,  $S_i \cup S_j $ is a connected subgraph of~$G$.
Each node belongs to at most one set~$S_i$. 
Let us visit the nodes of the path of blocks, by increasing order of their IDs, and 
let $S_a$ be the last visited set. 
Let $\ell_a$ be the first node belonging to $S_a$, i.e., the node with minimum ID in $S_a$. 
Let us now consider a set $S_b$ such that  $S_b\cap [\ell_a- k+2,  \ell_a]=\emptyset$. Such a set exists because there are $k$ disjoint sets, and the interval is of size $k-1$. Let $\ell_b$ be a node of $S_b$ such that  $\ell_b < \ell_a$. Such a node exists because, by definition of $S_a$, $S_b$ is visited before $S_a$ when nodes are traversed according to the increasing order of their IDs.
The interval  $I =  [\ell_a- k+2,  \ell_a -1]$ contains   $k-2$ integers, and $(S_a  \cup S_b ) \cap I$ is empty. 
Now, on the one hand, there cannot exist an edge  $\{\ell, \ell'\}$ with $\ell < \ell_a- k+2$, and
  $\ell' > \ell_a -1$, because $\vert \ell - \ell'  \vert \geq k-1$. However, on the other hand, the sets 
  \[
  S_a  \cap [\ell_a, (r+1)(p+2) ] \;\;\mbox{and}\;\; S_b  \cap [0, \ell_a -k+1]
  \]
are both non-empty, as they respectively contain $\ell_a$  and $\ell_b$.  It follows that  $S_a  \cup S_b$  is not connected, and therefore $K_k$ cannot be a minor of the path of blocks.
\end{proofofclaim} 

\begin{claim}
Cycles of blocks are \emph{not} $K_k$-minor-free.
\end{claim}

\begin{proofofclaim} 
Let us consider a cycle of blocks, and let $B$ be an arbitrary block of this cycle. By construction, the graph obtained by removing $B$ from the cycle is connected. Let us contract this graph into one node~$v$. The node~$v$ is connected to all nodes of $B$ since, as pointed out before, every node of $B$ is linked to at least one node outside~$B$. Now, as $B$ is in itself $K_{k-1}$, the resulting contracted graph is~$K_k$. 
\end{proofofclaim} 

To complete the proof of Lemma~\ref{lem:Kk}, it is sufficient to show that cycles and paths of blocks are indistinguishable whenever the certificates are too small.

Let us assume that there exists a locally checkable proof with certificates of size $g(n)=o(\log n)$ bits. 
We show that there exist two paths of blocks for which the prover gives to every block the exact same certificates. For this purpose, 
let us call \emph{labeled block} a block in which every node is given a certificate of size $g(n)$. Every block can be labeled in $2^{(k-1)g(n)}$ manners. Let us consider a set of labeled blocks where one labeled version of each block appears as element of this set. 
There are $2^{(k-1)g(n)p}$ such sets. 
On the other hand, there are $p!$ different paths of blocks, as there are $p!$ permutations of the ordinary blocks. 
To compare these numbers, it is simpler to compare their logarithms. We get the following asymptotics (having in mind  that $k$ is a constant, and  $p= \Theta(n)$):
\[
\log\left(2^{(k-1)g(n)p}\right) = o(n\log n) 
\text{ and } 
\log(p!) = \Omega(n\log n).
\]
It follows that, for large $n$, there are more paths of blocks than distinct sets of labeled blocks. Thus, by the pigeon-hole principle, for large enough $n$, there exist two paths of blocks, $P$ and~$P'$, where all nodes accept with a same certificate assignment, i.e., certificates inducing identical labeled blocks. Let $A$ be this certificate assignment. 

Without loss of generality, let us assume that $P$ corresponds to the identity permutation. In the permutation $\pi$ corresponding to $P'$, there must exist two indices $i$ and $j$ with $i<j$ such that $\pi(i)>\pi(j)$. 
We define a cycle of blocks $C$ as follows. The blocks $B_i, B_{i+1},...,B_{j}$ are connected in this order, and the cycle is closed by connecting $B_{j}$ to $B_{i}$. 
We claim that every node in $C$  accepts whenever it is given certificates according to $A$. Indeed, let $u$ be any node of $C$, except the ones incident to an edge between  $B_{j}$ and~$B_{i}$. Node~$u$ has the same view in $C$ as in $P$ (i.e., it has the same neighbors, with the same identifiers, and the same certificates),  thus it accepts. Now, let us consider a node that is incident to an edge between  $B_{j}$ and~$B_{i}$. As pointed out before, such a node is not linked to any node from a third block. It follows that it has the same view in $C$ as in $P'$.  
Therefore, all the nodes of $C$ accept, in contradiction with the soundness condition. 
This completes the proof of Lemma~\ref{lem:Kk}.
\end{proof}

Note that the proof of Lemma~\ref{lem:Kk} can be adapted to a larger verification radius~$t$, by replacing each edge by a path of length~$t$. 

\subsection{Lower bound for $\Forb(K_{p,q})$}
\label{subsec:Kpq}

Our lower bound proof for $\Forb(K_{p,q})$ is an adaptation of the lower bound proof for spanning tree and leader election, by G\"o\"os and Suomela~\cite{GoosS16}. Roughly speaking, we show that, if the certificates are of size $o(\log n)$, then it is possible to define a set of legal instances (i.e.,  $K_{p,q}$-free graphs) which can be ``glued''  together in order to obtain an illegal instance (i.e., a graph containing $K_{p,q}$ as a minor) in such a way that no vertices can locally distinguish which instance they belong to. 

\begin{lemma}
\label{lem:Kpq}
For every $p,q \geq 2$, there are no locally checkable proofs for $\Forb(\{ K_{p,q} \})$ using certificates on $o(\log n)$ bits.
\end{lemma}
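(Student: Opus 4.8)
The plan is to argue by contradiction along the cut-and-paste methodology of G\"o\"os and Suomela~\cite{GoosS16}, but engineered so that the ``forbidden object'' produced by gluing is a $K_{p,q}$-minor rather than a cycle. Assume there is a locally checkable proof for $\Forb(\{K_{p,q}\})$ with certificates of size $g(n)=o(\log n)$, and let $t$ be the (constant) verification radius. First I would fix a super-polynomially large family of \emph{legal} instances, i.e. $K_{p,q}$-minor-free graphs, built out of small gadgets connected along a linear skeleton, with the gadgets carrying \emph{predetermined} identifiers. The counting engine is then the same as in the proof of Lemma~\ref{lem:Kk}: since $g(n)=o(\log n)$, a certificate takes only $2^{g(n)}=n^{o(1)}$ distinct values, so the certificate pattern on any \emph{constant-size} gluing interface ranges over only $n^{o(1)}$ possibilities; as there are super-polynomially many legal instances, by the pigeonhole principle several of them carry the very same certificates on their interfaces.

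The construction must ensure two opposite properties of the gadget skeleton: a \emph{linear} (path- or tree-like) arrangement is $K_{p,q}$-minor-free, whereas a suitable \emph{recombination} of pieces drawn from the matched legal instances produces $p$ branch sets and $q$ branch sets that are pairwise adjacent, i.e. a subgraph contractible to $K_{p,q}$. The prototype is $p=q=2$: here $\Forb(K_{2,2})$ is exactly ``$C_4$-minor-free'', so linear instances are essentially trees (every block an edge or a triangle), a long path is legal, and closing/crossing the pieces into a long cycle yields a $C_4=K_{2,2}$ minor; this is precisely the path-versus-cycle indistinguishability that forces $\Omega(\log n)$ certificates. For general $p,q$ I would use a richer gadget: think of one side of $K_{p,q}$ (say its $p$ vertices) as a fixed shared skeleton, with each recombined piece contributing one of the $q$ vertices of the other side together with $p$ internally disjoint paths reaching the skeleton, so that a single piece realizes only a subdivided star (a tree, hence $K_{p,q}$-minor-free) while $q$ pieces assembled around the common skeleton realize a full subdivision of $K_{p,q}$. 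Because the gadget identifiers are fixed, any particular gadget-to-gadget connection presents the \emph{same} local view in every instance of the family; consequently, when the matched pieces are stitched together into the recombined graph $G^\star$, every vertex of $G^\star$ has a radius-$t$ view (graph structure, identifiers, and certificates) identical to its view in the legal instance it came from, so the verifier accepts everywhere. Since $G^\star$ contains $K_{p,q}$ as a minor, this contradicts soundness.

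The hard part will be the interplay between the minor structure and the identifiers, which is exactly why this is a \emph{non-standard} adaptation of~\cite{GoosS16}. On the minor side, the gadget must simultaneously force the complete bipartite adjacency pattern of $K_{p,q}$ upon recombination while keeping each linear instance $K_{p,q}$-minor-free; this is most delicate for $p=q=2$, where even a single spurious long cycle is already fatal, so the legal instances are forced to be near-trees and the recombination must create a $2$-connected piece on more than three vertices. On the identifier side, recall that in an LCP (unlike a PLS with $o(\log n)$-bit certificates) the verifier at a node sees the identifiers of all its neighbors; hence the recombination is allowed to reuse only those interface adjacencies whose identifiers \emph{and} certificates already occur, with the same local pattern, inside a legal instance. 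Guaranteeing this double match—certificates \emph{and} identifiers at once—requires coordinating the identifier assignment across the whole family (e.g. by fixing the gadget identifiers and making the counting pigeonhole over certificate patterns on interfaces whose identifier layout is common to all instances), and this identifier bookkeeping is the principal technical obstacle of the proof.
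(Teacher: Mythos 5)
Your overall methodology --- glue accepting legal instances into an illegal one, pigeonhole over interface certificates, and fix the identifiers in advance because an LCP verifier sees its neighbours' IDs --- is the right one and matches the paper's. But the concrete gadget you propose does not work, and the counting you invoke is not the one a working gadget needs. If the $p$ skeleton vertices are literally shared, then in a legal instance a skeleton vertex has a single piece attached to it, while in the recombined graph $G^\star$ it has $q$ pieces attached: its degree, and hence its radius-$t$ view, changes between the legal instances and $G^\star$. The verifier at a skeleton vertex is therefore not fooled and may legitimately reject, so soundness is not contradicted. The indispensable requirement --- every vertex of the illegal instance must have exactly the same radius-$t$ view (graph structure, identifiers, and certificates) as in some accepting legal instance --- fails precisely at the vertices where the $K_{p,q}$ minor is being assembled, i.e.\ at the point your proposal defers to ``the principal technical obstacle'' rather than resolves.

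The paper's gadget shows what is actually needed. A legal instance $I_{a,b}$ consists of two disjoint paths carrying the ID blocks $a$ and $b$, joined by $q$ rungs at prescribed positions $a[jd]$--$b[jd]$; such an instance is outerplanar, hence $K_{p,q}$-minor-free for $q\geq 3$ (the case $p=q=2$ is dispatched separately via known constructions). Every vertex, including each rung endpoint, has the same degree and the same ID-neighbourhood in every instance, legal or glued. The illegal instance takes $q$ copies of each side and reroutes rung $j$ of the path with IDs $a_i$ to the path with IDs $b_{i+j \bmod q}$, so that contracting the $2q$ paths yields $K_{q,q}$; the view of $a_i[jd]$ in the glued graph is then identical to its view in the legal instance $I_{a_i,b_{i+j}}$. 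For this to hold at the level of certificates one needs all $q^2$ pairs $(a_i,b_j)$ to receive identical certificate tuples at (and around) the rung endpoints. Plain pigeonhole over a super-polynomial family, as in Lemma~\ref{lem:Kk}, does not deliver this grid structure: the family here has only $n^2$ members, one per pair of ID blocks, and the paper instead colours the edges of $K_{n,n}$ by the interface certificate tuple, extracts a monochromatic subgraph with more than $n^{2-1/(2q)}$ edges, and applies the K\H{o}v\'ari--S\'os--Tur\'an bound to find a monochromatic $K_{q,q}$. Your proposal is missing both this combinatorial step and a gadget for which recombination is view-preserving at every vertex.
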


\begin{proof}
 If $p=q=2$, then $K_{p,q}$ is a 4-cycle, and thus $\Forb(\{ K_{p,q} \})$ is the family of graphs in which the only allowed cycles are of length~3. The fact that this class of graphs cannot be certified by a locally checkable proof using certificates on $o(\log n)$ bits follows directly for previously known constructions (see \cite{GoosS16}).

Let $p \geq 2$ and $q\geq 3$, and let us assume, for the purpose of contradiction, that $\Forb(\{ K_{p,q} \})$ admits a locally checkable proof with certificates of size $o(\log n)$. Without loss of generality, we assume that $q\geq p$, and we consider  $n \geq 6q$. The range of identifiers is split into subsets, as in~\cite{GoosS16}. Let $n_A = \lfloor n/2 \rfloor$, and $n_B = \lceil n/2 \rceil$. Let us fix any partition of the identifiers $\{1, \dots, n^2\}$ into $2n$ subsets $a_1, \dots, a_n, b_1, \dots, b_n$ so that each $a_i \in A= \{a_1, \dots, a_n\}$ is of size $n_A$, and each $b_i\in  B = \{b_1, \dots, b_n\}$ is of size $n_B$. We denote by $a_i[j]$ (resp. $b_i[j]$) the $j$-th identifier of $a_i\in A$ (resp.~$b_i\in B$) in increasing order of the IDs in $a_i$ (resp.~$b_i$).

For every pair $(a,b)\in A\times B$, let $I_{a,b}$ be a legal instance, i.e., a $K_{p,q}$-minor-free graph $G$ with a specific identity-assignment to its nodes, defined as follows. Let $d = \lfloor \frac{n}{2q} \rfloor$. The graph $G$ is defined as two disjoint paths, one of length $n_A$, and  another of length $n_B$. In $I_{a,b}$, the nodes in the former path are given IDs in $a$, in increasing order, while the nodes in the latter path are given IDs in $b$, also in increasing order. Also, for every $j \in \{1, \dots, q\}$,  there is an edge in $G$ between the node with ID $a[jd]$ and the node with ID $b[jd]$. An illustration of this construction is displayed on Figure \ref{fig:biplow1}.  

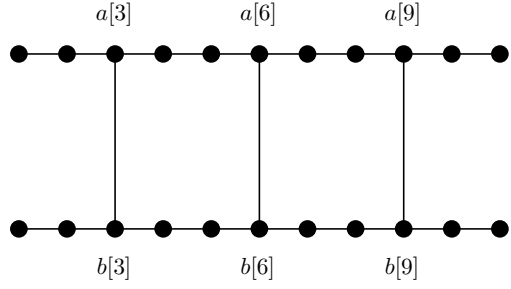
\begin{figure}[h]
\centering
\scalebox{0.8}{
\tikzset{every picture/.style={line width=0.75pt}} 

\begin{tikzpicture}[x=0.75pt,y=0.75pt,yscale=-1,xscale=1]

\draw  [fill={rgb, 255:red, 0; green, 0; blue, 0 }  ,fill opacity=1 ] (270,155) .. controls (270,152.24) and (272.24,150) .. (275,150) .. controls (277.76,150) and (280,152.24) .. (280,155) .. controls (280,157.76) and (277.76,160) .. (275,160) .. controls (272.24,160) and (270,157.76) .. (270,155) -- cycle ;
\draw  [fill={rgb, 255:red, 0; green, 0; blue, 0 }  ,fill opacity=1 ] (240,155) .. controls (240,152.24) and (242.24,150) .. (245,150) .. controls (247.76,150) and (250,152.24) .. (250,155) .. controls (250,157.76) and (247.76,160) .. (245,160) .. controls (242.24,160) and (240,157.76) .. (240,155) -- cycle ;
\draw  [fill={rgb, 255:red, 0; green, 0; blue, 0 }  ,fill opacity=1 ] (300,155) .. controls (300,152.24) and (302.24,150) .. (305,150) .. controls (307.76,150) and (310,152.24) .. (310,155) .. controls (310,157.76) and (307.76,160) .. (305,160) .. controls (302.24,160) and (300,157.76) .. (300,155) -- cycle ;
\draw  [fill={rgb, 255:red, 0; green, 0; blue, 0 }  ,fill opacity=1 ] (330,155) .. controls (330,152.24) and (332.24,150) .. (335,150) .. controls (337.76,150) and (340,152.24) .. (340,155) .. controls (340,157.76) and (337.76,160) .. (335,160) .. controls (332.24,160) and (330,157.76) .. (330,155) -- cycle ;
\draw  [fill={rgb, 255:red, 0; green, 0; blue, 0 }  ,fill opacity=1 ] (360,155) .. controls (360,152.24) and (362.24,150) .. (365,150) .. controls (367.76,150) and (370,152.24) .. (370,155) .. controls (370,157.76) and (367.76,160) .. (365,160) .. controls (362.24,160) and (360,157.76) .. (360,155) -- cycle ;
\draw  [fill={rgb, 255:red, 0; green, 0; blue, 0 }  ,fill opacity=1 ] (390,155) .. controls (390,152.24) and (392.24,150) .. (395,150) .. controls (397.76,150) and (400,152.24) .. (400,155) .. controls (400,157.76) and (397.76,160) .. (395,160) .. controls (392.24,160) and (390,157.76) .. (390,155) -- cycle ;
\draw  [fill={rgb, 255:red, 0; green, 0; blue, 0 }  ,fill opacity=1 ] (420,155) .. controls (420,152.24) and (422.24,150) .. (425,150) .. controls (427.76,150) and (430,152.24) .. (430,155) .. controls (430,157.76) and (427.76,160) .. (425,160) .. controls (422.24,160) and (420,157.76) .. (420,155) -- cycle ;
\draw  [fill={rgb, 255:red, 0; green, 0; blue, 0 }  ,fill opacity=1 ] (450,155) .. controls (450,152.24) and (452.24,150) .. (455,150) .. controls (457.76,150) and (460,152.24) .. (460,155) .. controls (460,157.76) and (457.76,160) .. (455,160) .. controls (452.24,160) and (450,157.76) .. (450,155) -- cycle ;
\draw  [fill={rgb, 255:red, 0; green, 0; blue, 0 }  ,fill opacity=1 ] (480,155) .. controls (480,152.24) and (482.24,150) .. (485,150) .. controls (487.76,150) and (490,152.24) .. (490,155) .. controls (490,157.76) and (487.76,160) .. (485,160) .. controls (482.24,160) and (480,157.76) .. (480,155) -- cycle ;
\draw  [fill={rgb, 255:red, 0; green, 0; blue, 0 }  ,fill opacity=1 ] (270,265) .. controls (270,262.24) and (272.24,260) .. (275,260) .. controls (277.76,260) and (280,262.24) .. (280,265) .. controls (280,267.76) and (277.76,270) .. (275,270) .. controls (272.24,270) and (270,267.76) .. (270,265) -- cycle ;
\draw  [fill={rgb, 255:red, 0; green, 0; blue, 0 }  ,fill opacity=1 ] (240,265) .. controls (240,262.24) and (242.24,260) .. (245,260) .. controls (247.76,260) and (250,262.24) .. (250,265) .. controls (250,267.76) and (247.76,270) .. (245,270) .. controls (242.24,270) and (240,267.76) .. (240,265) -- cycle ;
\draw  [fill={rgb, 255:red, 0; green, 0; blue, 0 }  ,fill opacity=1 ] (300,265) .. controls (300,262.24) and (302.24,260) .. (305,260) .. controls (307.76,260) and (310,262.24) .. (310,265) .. controls (310,267.76) and (307.76,270) .. (305,270) .. controls (302.24,270) and (300,267.76) .. (300,265) -- cycle ;
\draw  [fill={rgb, 255:red, 0; green, 0; blue, 0 }  ,fill opacity=1 ] (330,265) .. controls (330,262.24) and (332.24,260) .. (335,260) .. controls (337.76,260) and (340,262.24) .. (340,265) .. controls (340,267.76) and (337.76,270) .. (335,270) .. controls (332.24,270) and (330,267.76) .. (330,265) -- cycle ;
\draw  [fill={rgb, 255:red, 0; green, 0; blue, 0 }  ,fill opacity=1 ] (360,265) .. controls (360,262.24) and (362.24,260) .. (365,260) .. controls (367.76,260) and (370,262.24) .. (370,265) .. controls (370,267.76) and (367.76,270) .. (365,270) .. controls (362.24,270) and (360,267.76) .. (360,265) -- cycle ;
\draw  [fill={rgb, 255:red, 0; green, 0; blue, 0 }  ,fill opacity=1 ] (390,265) .. controls (390,262.24) and (392.24,260) .. (395,260) .. controls (397.76,260) and (400,262.24) .. (400,265) .. controls (400,267.76) and (397.76,270) .. (395,270) .. controls (392.24,270) and (390,267.76) .. (390,265) -- cycle ;
\draw  [fill={rgb, 255:red, 0; green, 0; blue, 0 }  ,fill opacity=1 ] (420,265) .. controls (420,262.24) and (422.24,260) .. (425,260) .. controls (427.76,260) and (430,262.24) .. (430,265) .. controls (430,267.76) and (427.76,270) .. (425,270) .. controls (422.24,270) and (420,267.76) .. (420,265) -- cycle ;
\draw  [fill={rgb, 255:red, 0; green, 0; blue, 0 }  ,fill opacity=1 ] (450,265) .. controls (450,262.24) and (452.24,260) .. (455,260) .. controls (457.76,260) and (460,262.24) .. (460,265) .. controls (460,267.76) and (457.76,270) .. (455,270) .. controls (452.24,270) and (450,267.76) .. (450,265) -- cycle ;
\draw  [fill={rgb, 255:red, 0; green, 0; blue, 0 }  ,fill opacity=1 ] (480,265) .. controls (480,262.24) and (482.24,260) .. (485,260) .. controls (487.76,260) and (490,262.24) .. (490,265) .. controls (490,267.76) and (487.76,270) .. (485,270) .. controls (482.24,270) and (480,267.76) .. (480,265) -- cycle ;
\draw    (245,155) -- (540,155) ;

\draw    (305,155) -- (305,265) ;

\draw    (395,150) -- (395,260) ;

\draw    (485,155) -- (485,265) ;

\draw    (245,265) -- (545,265) ;

\draw  [fill={rgb, 255:red, 0; green, 0; blue, 0 }  ,fill opacity=1 ] (510,155) .. controls (510,152.24) and (512.24,150) .. (515,150) .. controls (517.76,150) and (520,152.24) .. (520,155) .. controls (520,157.76) and (517.76,160) .. (515,160) .. controls (512.24,160) and (510,157.76) .. (510,155) -- cycle ;
\draw  [fill={rgb, 255:red, 0; green, 0; blue, 0 }  ,fill opacity=1 ] (540,155) .. controls (540,152.24) and (542.24,150) .. (545,150) .. controls (547.76,150) and (550,152.24) .. (550,155) .. controls (550,157.76) and (547.76,160) .. (545,160) .. controls (542.24,160) and (540,157.76) .. (540,155) -- cycle ;
\draw  [fill={rgb, 255:red, 0; green, 0; blue, 0 }  ,fill opacity=1 ] (540,265) .. controls (540,262.24) and (542.24,260) .. (545,260) .. controls (547.76,260) and (550,262.24) .. (550,265) .. controls (550,267.76) and (547.76,270) .. (545,270) .. controls (542.24,270) and (540,267.76) .. (540,265) -- cycle ;
\draw  [fill={rgb, 255:red, 0; green, 0; blue, 0 }  ,fill opacity=1 ] (510,265) .. controls (510,262.24) and (512.24,260) .. (515,260) .. controls (517.76,260) and (520,262.24) .. (520,265) .. controls (520,267.76) and (517.76,270) .. (515,270) .. controls (512.24,270) and (510,267.76) .. (510,265) -- cycle ;

\draw (304.5,130) node    {$a[ 3]$};
\draw (394.5,130) node    {$a[ 6]$};
\draw (484.5,130) node    {$a[ 9]$};
\draw (304.5,290) node    {$b[ 3]$};
\draw (394,290) node    {$b[ 6]$};
\draw (484,290) node    {$b[ 9]$};

\end{tikzpicture}
}
\vspace{-2ex}
\caption{An instance $I_{a,b}$ for $n=22$, $p=q=3$, and $d = 3$.}
\label{fig:biplow1}
\end{figure}

Observe that $G$ is outerplanar, and thus $G \in \Forb(\{ K_{2,3} \})$, from which it follows that  $G \in \Forb(\{ K_{p,q} \})$ for every $p \geq 2$ and $q \geq 3$.

For each node $v\in V(G)$, let $c_{a,b}(v)$ be the certificate provided to $v$ in instance $I_{a,b}$ leading all nodes to accept, and define 
\[
c(a,b) = \big(c_{a,b}(a[d]),c_{a,b}(a[2d]),\dots, c_{a,b}(a[qd]), c_{a,b}(b[d]),c_{a,b}(b[2d]),\dots, c_{a,b}(a[qd])\big).
\]
Since all certificates are of size $o(\log n)$, $c(a,b)$ forms a word on $o(\log n)$ bits. Now let $K_{n,n} = (A\cup B, E)$ be the complete bipartite graph with $E = \{\{a,b\}: a\in A, b\in B\}$, where every edge $\{a,b\}\in E$ is colored by $c(a,b)$. Pick~$n$ sufficiently large such that the number of bits of $c(a,b)$ is smaller than $\frac{\log(n)}{2q}$. Then, there exists a monochromatic subset $F$ of edges of $K_{n,n}$ with 
\[
|F| > |E|/n^{\frac{1}{2q}} = n^{2-\frac{1}{2q}}.
\] 
It is known~\cite{furedi1996upper} that every graph with $n^{2-1/q} + o(n^{2-1/q})$ edges contains $K_{q,q}$ as subgraph. It follows that, for sufficiently large~$n$, the graph $H=(A\cup B, F)$ contains $K_{q,q}$ as subgraph. Up to reindexing the elements in the sets~$A$ and~$B$, the vertices of this subgraph are $a_1, \dots, a_q$ and  $b_1, \dots, b_q$. Since all edges in~$H$ have  the same color, 
\[
c(a_{i}, b_{j}) = c(a_{i'}, b_{j'}), 
\]
for every $(i, i', j, j') \in \{1, \dots, q\}^4$. We now create a new instance $J$, that roughly consists in gluing together the instances $I_{a_i,b_j}$ with $1\leq i,j\leq q$, for constructing an illegal instance. Let us consider $q$ disjoint copies $P_1,\dots,P_q$ (resp., $Q_1,\dots,Q_q$) of a path~$P$ of length $n_A$ (resp., of a path~$Q$ of length $n_B$), such that, in $P_i$ (resp., $Q_i$), $i=1,\dots,q$, the nodes are given the IDs in $a_i$ (resp., $b_i$) in increasing order. For every $i,j \in \{1, \dots, q\}$, the nodes $a_i[jd]$ and $b_{i+j}[jd]$, where by $i+j$ is taken modulo $q$ whenever greater than~$q$, are connected by an edge. Figure \ref{fig:biplow2} displays an example of the construction.  

\begin{figure}[h]
\centering
\scalebox{0.85}{
\input{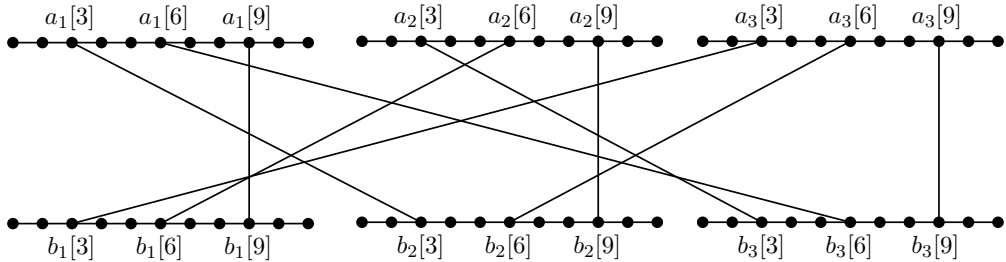}
}
\vspace{-2ex}
\caption{Example of the instance $J$ for $p=q=3$, and $d = 3$.}
\label{fig:biplow2}
\end{figure}

Observe that, by construction,  if each of the $q$ paths $P_1,\dots,P_q$, and each of the  $q$ paths $Q_1,\dots,Q_q$ is contracted into a single node, then the resulting graph is  $K_{q,q}$. In other words, the underlying network in $J$ contains $K_{q,q}$ as a minor, and therefore also $K_{p,q}$ as a minor. It follows that the instance $J$ is illegal. 

Let us consider the following certificate assignment $c_J$ to the nodes in instance~$J$.  For every $i\in\{1,\dots,q\}$, every node $u$ of the path $P_i$ (resp., $Q_i$) is given the certificate that it would receive in $I_{a_i, b_i}$. It follows from this assignment that, for every node $u$ whose ID is different from $a_i[jd]$ and $b_i[jd])$, for all $ i,j \in \{1, \dots, q\}$, then the closed neighborhood of~$u$ in $J$ with certificate assignment~$c_J$ is identical to the closed neighborhood of the same node~$u$ in $I_{a_i,b_i}$ with certificates assignment $c_{a_i, b_i}$. It follows that such a node~$u$ accepts in $J$. Let $u$ be a node of $P_i$ with ID $a_i[jd])$ for some $i,j\in\{1,\dots,q\}$.  The closed neighborhood of~$u$ in $J$ with certificate assignment~$c_J$ is identical to the view of the same node~$u$ in $I_{a_i,b_{i+j}}$ with certificates assignment $c_{a_i, b_{i+j}}$. It follows that such a node~$u$ accepts in $J$ as well. The case of a node of $Q_i$ with ID $b_i[jd])$ for some $i,j\in\{1,\dots,q\}$ is identical, i.e., such a node accepts in $J$ too. Therefore, with certificate assignment $c_J$, all nodes accept in $J$, and thus soundness is not satisfied. This contradicts the existence of a locally checkable proof for $\Forb(\{ K_{p,q} \})$ with $o(\log n)$ bits, and completes the proof of Lemma~\ref{lem:Kpq}. 
\end{proof}

Again, the proof can be adapted to a larger verification radius $t$. The instances are formed by paths with identifiers only in $a$, or only in $b$, and edges having an endpoint in $a$, and an endpoint in $b$. We replace each edge of the former type (identifiers only in $a$, or only in $b$) by a path of length~$t$. The same proof as for Lemma~\ref{lem:Kpq} applies after this change, in particular the legal instances are still outerplanar.

\subsection{Proof of Theorem~\ref{theo:lwb}}
\label{subsec:lower-bound-proof}

Let $\mathcal{F} = \{K_k:k\geq 3\}\cup\{K_{p,q}: p,q\geq 2\}$, and $\cal{H}$ be a finite set of graphs in $\mathcal{F}$. 
If $\cal{H}$ contains a unique graph in $\mathcal{F}$, the fact that there are no locally checkable proofs for $\Forb(\mathcal{H})$ using certificates on $o(\log n)$ bits has been established in Lemma~\ref{lem:Kk} and Lemma~\ref{lem:Kpq}.  
We are left with the case of $|\mathcal{H}|\geq 2$. 

Let us first make a few remarks about the family $\mathcal{H}$. 
If $\mathcal{H}$ contains two cliques $K_k$ and $K_{k'}$ with $k'>k$, then we can remove $K_{k'}$, because $K_k$-minor-freeness implies $K_{k'}$-minor-freeness. Thus $\mathcal{H}$ contains at most one clique.
Similarly, if $\mathcal{H}$ contains two complete bipartite graphs $K_{p,q}$ and $K_{p',q'}$, with $p' \geq p$ and $q' \geq q$, then we can safely remove $K_{p',q'}$ from the family. 
Also, if $\mathcal{H}$ contains $K_3$, then, $\Forb(\mathcal{H})$ is the class of trees, for which it is known that $o(\log n)$ bits are not sufficient for certification. Similarly, if $\mathcal{H}$ contains $K_{2,2}$ (and no $K_3$) then $\Forb(\mathcal{H})$ is the class of the graphs that do not contain cycles larger than 4, and, as noted in the proof of Lemma~\ref{lem:Kpq}, this is also a case where $o(\log n)$ is not enough.

We are then left with the case of a family $\mathcal{H}$ included in $\{K_k:k\geq 4\}\cup\{K_{p,q}: p\geq 2,q\geq 3\}$, containing exactly one clique $K_k$, $k\geq 4$, and one or many complete bipartite graphs $K_{p,q}$, $p\geq 2,q\geq 3$. Consequently, the class of graphs $\Forb(\cal{H})$ contains the class $\Forb(\{K_4,K_{2,3}\})$. The latter is precisely the class of outerplanar graphs. 
Recall that in the construction of proof of Lemma~\ref{lem:Kpq} the legal instances are outerplanar, therefore these instances must be in $\Forb(\cal{H})$. The construction of the proof of Lemma~\ref{lem:Kpq} shows that if a scheme with $o(\log n)$ bits accepts all these instances, it also accepts an instance with $K_{p,q}$ as a minor, which is a contradiction. The result follows. \qed

\section{Conclusion and further work}

This paper provides a proof-labeling scheme for planarity, using certificates of optimal size $\Theta(\log n)$ bits, hence improving the previously known \dMAM\/ interactive protocol~\cite{NaorPY20}, by decreasing the number of interactions between the prover and the verifier, and avoiding the use of randomization. 

This work could find extensions in many directions, fitting with the aforementioned interest of the distributed computing community for classes of sparse graphs beyond planar graphs (see~\cite{Feuilloley20}). For instance, the approach in this paper appears to be generalizable to the design of a proof-labeling scheme for the class of graphs with bounded genus $g>1$, still using certificates on $O(\log n)$ bits, by cutting along so-called \emph{nooses}  until the resulting surface is planar. Certifying the correctness of such cuts, and the consistency of the embedding after these cuts requires to fix many technical details, but appear doable with  certificates on $O(\log n)$ bits. 

Perhaps more interesting is to extend our result to the certification of graph classes defined by a finite set of forbidden minors.  The technique in this paper also enables to certify the class of outerplanar graphs with certificates on $O(\log n)$ bits, but the general case of classes defined by forbidden minors is open. A standard tool to consider in general for graphs excluding a fixed minor, that is, to certify $G\in\Forb(H)$ for a fixed graph~$H$, is the characterization of the  topological embeddings for these graphs provided by Robertson and Seymour's Theorem. However, certifying the correctness of such embeddings seems challenging using certificates of small size. In fact, the minor-excluding graphs have a linear number of edges, and therefore they admit proof-labeling schemes using certificates on $O(n\log n)$ bits~\cite{GoosS16,KormanKP10}. Even the design of proof-labeling schemes for these graph classes with certificates on $O(n^\epsilon)$ bits, $\epsilon<1$, appears challenging.

\bibliographystyle{plain}
\bibliography{pls-planar}

\begin{thebibliography}{10}

\bibitem{AfekKY97}
Yehuda Afek, Shay Kutten, and Moti Yung.
\newblock The local detection paradigm and its application to
  self-stabilization.
\newblock {\em Theor. Comput. Sci.}, 186(1-2):199--229, 1997.

\bibitem{AmiriMRS18}
Saeed~Akhoondian Amiri, Patrice~Ossona de~Mendez, Roman Rabinovich, and
  Sebastian Siebertz.
\newblock Distributed domination on graph classes of bounded expansion.
\newblock In {\em 30th ACM Symposium on Parallelism in Algorithms and
  Architectures (SPAA)}, pages 143--151, 2018.

\bibitem{AmiriSS19}
Saeed~Akhoondian Amiri, Stefan Schmid, and Sebastian Siebertz.
\newblock Distributed dominating set approximations beyond planar graphs.
\newblock {\em {ACM} Trans. Algorithms}, 15(3):39:1--39:18, 2019.

\bibitem{Awerbuch85}
Baruch Awerbuch.
\newblock A new distributed depth-first-search algorithm.
\newblock {\em Inf. Process. Lett.}, 20(3):147--150, 1985.

\bibitem{AwerbuchPV91}
Baruch Awerbuch, Boaz Patt{-}Shamir, and George Varghese.
\newblock Self-stabilization by local checking and correction (extended
  abstract).
\newblock In {\em 32nd Symposium on Foundations of Computer Science (FOCS)},
  pages 268--277, 1991.

\bibitem{BalliuDFO17}
Alkida Balliu, Gianlorenzo D'Angelo, Pierre Fraigniaud, and Dennis Olivetti.
\newblock What can be verified locally?
\newblock In {\em 34th Symposium on Theoretical Aspects of Computer Science
  (STACS)}, pages 8:1--8:13, 2017.

\bibitem{Cabello19}
Sergio Cabello.
\newblock Subquadratic algorithms for the diameter and the sum of pairwise
  distances in planar graphs.
\newblock {\em {ACM} Trans. Algorithms}, 15(2):21:1--21:38, 2019.

\bibitem{CrescenziFP19}
Pierluigi Crescenzi, Pierre Fraigniaud, and Ami Paz.
\newblock Trade-offs in distributed interactive proofs.
\newblock In {\em 33rd International Symposium on Distributed Computing
  (DISC)}, LIPIcs 146, pages 13:1--13:17. Dagstuhl, 2019.

\bibitem{CHS06}
Andrzej Czygrinow, Micha{\l} Ha{\'n}{\'c}kowiak, and Edyta Szymanska.
\newblock Distributed approximation algorithms for planar graphs.
\newblock In {\em 6th Italian Conference on Algorithms and Complexity (CIAC)},
  pages 296--307, 2006.

\bibitem{CHSWW14}
Andrzej Czygrinow, Micha{\l} Ha{\'n}{\'c}kowiak, Edyta Szymanska, Wojciech
  Wawrzyniak, and Marcin Witkowski.
\newblock Distributed local approximation of the minimum k-tuple dominating set
  in planar graphs.
\newblock In {\em 18th Int. Conference on Principles of Distributed Systems
  (OPODIS)}, pages 49--59, 2014.

\bibitem{CHSWW17}
Andrzej Czygrinow, Micha{\l} Ha{\'n}{\'c}kowiak, Edyta Szymanska, Wojciech
  Wawrzyniak, and Marcin Witkowski.
\newblock Improved distributed local approximation algorithm for minimum
  2-dominating set in planar graphs.
\newblock {\em Theor. Comput. Sci.}, 662:1--8, 2017.

\bibitem{CHW08-spaa}
Andrzej Czygrinow, Micha{\l} Ha{\'n}{\'c}kowiak, and Wojciech Wawrzyniak.
\newblock Distributed packing in planar graphs.
\newblock In {\em 20th {ACM} Symposium on Parallelism in Algorithms and
  Architectures (SPAA)}, pages 55--61, 2008.

\bibitem{CHW08-disc}
Andrzej Czygrinow, Micha{\l} Ha{\'n}{\'c}kowiak, and Wojciech Wawrzyniak.
\newblock Fast distributed approximations in planar graphs.
\newblock In {\em 22nd Int. Symp. on Distributed Computing (DISC)}, pages
  78--92, 2008.

\bibitem{FraysseixR85}
Hubert de~Fraysseix and Pierre Rosenstiehl.
\newblock A characterization of planar graphs by tr{\'{e}}maux orders.
\newblock {\em Combinatorica}, 5(2):127--135, 1985.

\bibitem{Diestel2000}
Reinhard Diestel.
\newblock {\em Graph Theory}, volume 173 of {\em Graduate Texts in
  Mathematics}.
\newblock Springer-Verlag, 2000.

\bibitem{Feuilloley2019}
Laurent Feuilloley.
\newblock Introduction to local certification.
\newblock {\em CoRR}, abs/1910.12747, 2019.

\bibitem{Feuilloley20}
Laurent Feuilloley.
\newblock Bibliography of distributed approximation beyonf bounded degree.
\newblock {\em CoRR}, abs/2001.08510, 2020.

\bibitem{FeuilloleyF16}
Laurent Feuilloley and Pierre Fraigniaud.
\newblock Survey of distributed decision.
\newblock {\em Bulletin of the {EATCS}}, 119, 2016.

\bibitem{FeuilloleyFH16}
Laurent Feuilloley, Pierre Fraigniaud, and Juho Hirvonen.
\newblock A hierarchy of local decision.
\newblock In {\em 43rd International Colloquium on Automata, Languages and
  Programming (ICALP)}, pages 118:1--118:15, 2016.

\bibitem{FeuilloleyFHPP18}
Laurent Feuilloley, Pierre Fraigniaud, Juho Hirvonen, Ami Paz, and Mor Perry.
\newblock Redundancy in distributed proofs.
\newblock In {\em 32nd International Symposium on Distributed Computing
  (DISC)}, LIPIcs, pages 24:1--24:18. Dagstuhl, 2018.

\bibitem{FeuilloleyH18}
Laurent Feuilloley and Juho Hirvonen.
\newblock Local verification of global proofs.
\newblock In {\em 32nd Int. Symp. on Distributed Computing (DISC)}, LIPIcs 121,
  pages 25:1--25:17. Dagstuhl, 2018.

\bibitem{FraigniaudKP13}
Pierre Fraigniaud, Amos Korman, and David Peleg.
\newblock Towards a complexity theory for local distributed computing.
\newblock {\em J. {ACM}}, 60(5):35:1--35:26, 2013.

\bibitem{FraigniaudMORT19}
Pierre Fraigniaud, Pedro Montealegre, Rotem Oshman, Ivan Rapaport, and Ioan
  Todinca.
\newblock On distributed {M}erlin-{A}rthur decision protocols.
\newblock In {\em 26th Int. Colloquium Structural Information and Communication
  Complexity (SIROCCO)}, LNCS 11639, pages 230--245. Springer, 2019.

\bibitem{FraigniaudPP19}
Pierre Fraigniaud, Boaz Patt{-}Shamir, and Mor Perry.
\newblock Randomized proof-labeling schemes.
\newblock {\em Distributed Computing}, 32(3):217--234, 2019.

\bibitem{furedi1996upper}
Zolt{\'a}n F{\"u}redi.
\newblock An upper bound on {Z}arankiewicz'problem.
\newblock {\em Combinatorics, Probability and Computing}, 5(1):29--33, 1996.

\bibitem{GhaffariH16a}
Mohsen Ghaffari and Bernhard Haeupler.
\newblock Distributed algorithms for planar networks {I:} planar embedding.
\newblock In {\em ACM Symposium on Principles of Distributed Computing (PODC)},
  pages 29--38, 2016.

\bibitem{GhaffariH16b}
Mohsen Ghaffari and Bernhard Haeupler.
\newblock Distributed algorithms for planar networks {II:} low-congestion
  shortcuts, {MST}, and min-cut.
\newblock In {\em 27th {ACM-SIAM} Symposium on Discrete Algorithms (SODA)},
  pages 202--219, 2016.

\bibitem{GhaffariP17}
Mohsen Ghaffari and Merav Parter.
\newblock Near-optimal distributed {DFS} in planar graphs.
\newblock In {\em 31st Int. Symp. on Distributed Computing (DISC)}, LIPIcs,
  pages 21:1--21:16. Dagstuhl, 2017.

\bibitem{GoosS16}
Mika G{\"{o}}{\"{o}}s and Jukka Suomela.
\newblock Locally checkable proofs in distributed computing.
\newblock {\em Theory of Computing}, 12(1):1--33, 2016.

\bibitem{HilkeLS13}
Miikka Hilke, Christoph Lenzen, and Jukka Suomela.
\newblock Brief announcement: local approximability of minimum dominating set
  on planar graphs.
\newblock In {\em {ACM} Symposium on Principles of Distributed Computing
  (PODC)}, pages 344--346, 2014.

\bibitem{HopcroftT74}
John~E. Hopcroft and Robert~Endre Tarjan.
\newblock Efficient planarity testing.
\newblock {\em J. {ACM}}, 21(4):549--568, 1974.

\bibitem{ItkisL94}
Gene Itkis and Leonid~A. Levin.
\newblock Fast and lean self-stabilizing asynchronous protocols.
\newblock In {\em 35th Annual Symposium on Foundations of Computer Science
  (FOCS)}, pages 226--239, 1994.

\bibitem{KolOS18}
Gillat Kol, Rotem Oshman, and Raghuvansh~R. Saxena.
\newblock Interactive distributed proofs.
\newblock In {\em {ACM} Symposium on Principles of Distributed Computing
  (PODC)}, pages 255--264, 2018.

\bibitem{KormanKP10}
Amos Korman, Shay Kutten, and David Peleg.
\newblock Proof labeling schemes.
\newblock {\em Distributed Computing}, 22(4):215--233, 2010.

\bibitem{KuhnMW04}
Fabian Kuhn, Thomas Moscibroda, and Roger Wattenhofer.
\newblock What cannot be computed locally!
\newblock In {\em 23rd {ACM} Symposium on Principles of Distributed Computing
  (PODC)}, pages 300--309, 2004.

\bibitem{LenzenOW08}
Christoph Lenzen, Yvonne~Anne Oswald, and Roger Wattenhofer.
\newblock What can be approximated locally?: case study: dominating sets in
  planar graphs.
\newblock In {\em 20th {ACM} Symposium on Parallelism in Algorithms and
  Architectures (SPAA)}, pages 46--54, 2008.

\bibitem{LenzenPW13}
Christoph Lenzen, Yvonne~Anne Pignolet, and Roger Wattenhofer.
\newblock Distributed minimum dominating set approximations in restricted
  families of graphs.
\newblock {\em Distributed Computing}, 26(2):119--137, 2013.

\bibitem{NaorPY20}
Moni Naor, Merav Parter, and Eylon Yogev.
\newblock The power of distributed verifiers in interactive proofs.
\newblock In {\em 31st {ACM-SIAM} Symposium on Discrete Algorithms (SODA)},
  pages 1096--115, 2020.

\bibitem{Peleg00}
David Peleg.
\newblock {\em Distributed Computing: A Locality-Sensitive Approach}.
\newblock SIAM, 2000.

\bibitem{PelegR00}
David Peleg and Vitaly Rubinovich.
\newblock A near-tight lower bound on the time complexity of distributed
  minimum-weight spanning tree construction.
\newblock {\em {SIAM} J. Comput.}, 30(5):1427--1442, 2000.

\bibitem{RodittyW13}
Liam Roditty and Virginia~Vassilevska Williams.
\newblock Fast approximation algorithms for the diameter and radius of sparse
  graphs.
\newblock In {\em 45th ACM Symposium on Theory of Computing (STOC)}, pages
  515--524, 2013.

\bibitem{SarmaHKKNPPW12}
Atish~Das Sarma, Stephan Holzer, Liah Kor, Amos Korman, Danupon Nanongkai,
  Gopal Pandurangan, David Peleg, and Roger Wattenhofer.
\newblock Distributed verification and hardness of distributed approximation.
\newblock {\em {SIAM} J. Comput.}, 41(5):1235--1265, 2012.

\bibitem{Tutte70}
William~T. Tutte.
\newblock Toward a theoryof crossing number.
\newblock {\em J. Combinatorial Theory}, 8:45--53, 1970.

\bibitem{Wawrzyniak14}
Wojciech Wawrzyniak.
\newblock A strengthened analysis of a local algorithm for the minimum
  dominating set problem in planar graphs.
\newblock {\em Inf. Process. Lett.}, 114(3):94--98, 2014.

\end{thebibliography}

\end{document}